\theoremstyle{definition}
\theoremstyle{remark}
\newtheorem{remark}{Remark}
\newtheorem{assumption}{Assumption}
\theoremstyle{plain}
\newtheorem{theorem}{Theorem}
\newtheorem{lemma}{Lemma}
\def\BibTeX{{\rm B\kern-.05em{\sc i\kern-.025em b}\kern-.08em
    T\kern-.1667em\lower.7ex\hbox{E}\kern-.125emX}}
\begin{document}

\title{Privacy-preserving Intelligent Resource Allocation for Federated Edge Learning in Quantum Internet
\author{Minrui~Xu, Dusit~Niyato,~\emph{Fellow,~IEEE}, Zhaohui~Yang, Zehui~Xiong, Jiawen~Kang*,\\ Dong~In~Kim,~\emph{Fellow,~IEEE}, and Xuemin~(Sherman)~Shen,~\emph{Fellow,~IEEE}}
\thanks{Minrui~Xu and Dusit~Niyato are with the School of Computer Science and Engineering, Nanyang Technological University, Singapore (e-mail: minrui001@e.ntu.edu.sg; dniyato@ntu.edu.sg); Zhaohui~Yang is with the College of Information Science and Electronic Engineering, Zhejiang University, Hangzhou 310007, China, and Zhejiang Provincial Key Lab of Information Processing, Communication and Networking (IPCAN), Hangzhou 310007, China, and also with Zhejiang Laboratory, Hangzhou 31121, China. (e-mail: zhaohuiyang92@gmail.com); Zehui~Xiong is with the Pillar of Information Systems Technology and Design, Singapore University of Technology and Design, Singapore 487372, Singapore (e-mail: zehui\_xiong@sutd.edu.sg); Jiawen~Kang is with the School of Automation, Guangdong University of Technology, China (e-mail: kavinkang@gdut.edu.cn). Dong~In~Kim is with the Department of Electrical and Computer Engineering, Sungkyunkwan University, Suwon 16419, South Korea (e-mail: dikim@skku.ac.kr); Xuemin~(Sherman)~Shen is with the Department of Electrical and Computer Engineering, University of Waterloo, Waterloo, ON, Canada, N2L 3G1 (e-mail: sshen@uwaterloo.ca). (\textit{*Corresponding author: Jiawen Kang})}
}


\maketitle
\begin{abstract}
Federated edge learning (FEL) is a promising paradigm of distributed machine learning that can preserve data privacy while training the global model collaboratively. However, FEL is still facing model confidentiality issues due to eavesdropping risks of exchanging cryptographic keys through traditional encryption schemes.  Therefore, in this paper, we propose a hierarchical architecture for quantum-secured FEL systems with ideal security based on the quantum key distribution (QKD) to facilitate public key and model encryption against eavesdropping attacks. Specifically, we propose a stochastic resource allocation model for efficient QKD to encrypt FEL keys and models. In FEL systems, remote FEL workers are connected to cluster heads via quantum-secured channels to train an aggregated global model collaboratively. However, due to the unpredictable number of workers at each location, the demand for secret-key rates to support secure model transmission to the server is unpredictable. The proposed systems need to efficiently allocate limited QKD resources (i.e., wavelengths) such that the total cost is minimized in the presence of stochastic demand by formulating the optimization problem for the proposed architecture as a stochastic programming model. To this end, we propose a federated reinforcement learning-based resource allocation scheme to solve the proposed model without complete state information. The proposed scheme enables QKD managers and controllers to train a global QKD resource allocation policy while keeping their private experiences local. Numerical results demonstrate that the proposed schemes can successfully achieve the cost-minimizing objective under uncertain demand while improving the training efficiency by about 50\% compared to state-of-the-art schemes.
\end{abstract}

\begin{IEEEkeywords}
Federated edge learning, quantum key distribution (QKD), resource allocation, deep reinforcement learning
\end{IEEEkeywords}

%
\IEEEpeerreviewmaketitle

\section{Introduction}
%
%
%
%
\IEEEPARstart{A}{rtificial} Intelligence (AI)  enables a wide range of computing and networking applications in edge networks, e.g., smart cities~\cite{wu2020collaborate, yang2021energy,choi2022enabling}, Internet of Vehicles~\cite{chekired20195g, cheng2019space}, and Metaverses~\cite{huynh2022artificial,xu2022quantum,xu2022full}. As one of the critical technologies in AI, federated edge learning (FEL) is a novel paradigm of privacy-preserving machine learning (ML) for intelligent edge networks~\cite{lim2020federated}. In FEL, multiple data owners (a.k.a., FEL workers) can train a global model collaboratively for a model owner without exposing their sensitive raw data.
To ensure the security of data and models in FEL systems, many modern cryptographic schemes are applied~\cite{huang2021starfl}, such as secure multi-party computation (MPC), trusted execution environment (TEE), and safe key distribution.  For instance, a secure and trusted collaborative edge learning framework is proposed in~\cite{tang2022secure}, where homomorphic encryption (HE) and blockchain are leveraged to track and choke malicious behaviors.
With the rapidly increasing computation power of quantum computers~\cite{xuan2021minimizing}, novel techniques will be brought to empower FEL systems, including large-scale searching, optimization and semantic communication. However, the FEL systems based on existing schemes are under serious security threats. For example, traditional key distribution schemes based on the hardness in computing of certain mathematical problems are no longer considered to be safe in the post-quantum era~\cite{huang2021starfl}. Fortunately, based on the quantum no-cloning theorem~\cite{buvzek1996quantum} and the Heisenberg’s uncertainty principle~\cite{weyl1927quantenmechanik}, quantum key distribution (QKD)~\cite{scarani2009security} is promising for providing proven secure key distribution schemes for collaborative training between FEL workers and model owners by facilitating public key and model encryption against eavesdropping attacks.

Originated from classical QKD schemes such as Bennett-Brassard-1984 (BB84)~\cite{bb84} and Grosshans-Grangier-2002 (GG02)~\cite{grosshans2002continuous}, some modern QKD schemes have paved the way for the Quantum Internet~\cite{wehner2018quantum} in recent years. For example, the measurement device-independent QKD (MDI-QKD)~\cite{lo2012measurement} provides one of the practical QKD solutions by increasing the range of secure communications and filling the detection gaps with an untrusted relay by avoiding any eavesdropping attacks on the Quantum Internet.
Although several existing works focus on the theoretical and experimental aspects of the deployments of MDI-QKD, the problems of QKD resource allocation in the Quantum Internet have been largely overlooked~\cite{cao2022evolution}. For example, in~\cite{cao2021hybrid}, a deterministic programming model and a heuristic approach based on the shortest path algorithm are proposed to optimize the deployment cost of QKD resources. However, the problem of optimal allocation of QKD resources for quantum-secured FEL systems with heterogeneous data and model owners remains open. In particular, the number of participating FEL workers at different locations and times is uncertain due to unpredictable node and device failures~\cite{lim2021decentralized}. Therefore, different security levels might be required by cluster heads to encrypt local models during global aggregation. Specifically, the secret-key rate for reaching the information-theoretic security (ITS) requirement is dynamic to support the encryption of intermediate model and related information according to uncertainties in quantum-secured FEL systems.


To address these uncertainty issues, we propose a stochastic QKD resource (i.e., wavelength) allocation model to optimize the QKD deployment cost of the Quantum Internet. To protect FEL models and public keys from eavesdropping attacks, we propose a hierarchical architecture for quantum-secured FEL systems that includes the FEL layer, the control and management layer, and the QKD infrastructure layer. To handle the dynamics of security demands from the FEL layer, we model the QKD resource allocation of QKD managers and QKD controllers in the control and management layer as a stochastic programming model that allocates QKD resources from the QKD infrastructure layer to cluster heads in the FEL layer. However, the proposed stochastic model can hardly be applied in practice because it requires complete state information from FEL nodes and QKD nodes, which is infeasible for QKD managers and controllers to collect due to privacy concerns~\cite{yang2019federated}. Fortunately, the independent QKD resource allocation problems can be addressed by the promising deep reinforcement learning (DRL) algorithms~\cite{li2020noma}. {Nevertheless, the efficiency and stability of learning-based approaches still face the issues of ``data islands" during their training and inference~\cite{yang2019federated}:
\begin{itemize}
    \item Initially, QKD managers and controllers configure QKD nodes to provide QKD resources based only on practical observation of the state of the FEL layer. However, the experiences, including observations, actions, and rewards, are kept in the local replay buffers due to privacy concerns. Therefore, the lack of collaboration between QKD managers and controllers makes QKD resource allocation problems more challenging to satisfy changing security demands in quantum-secured FEL systems.
    \item Furthermore, QKD managers are reluctant to share their rewards from the FEL layers directly with QKD controllers. Therefore, QKD controllers can only collect incomplete experiences, including states and actions, during their interaction with the FEL systems, which are insufficient for their local policy improvement. Therefore, QKD controllers can only use policies shared by QKD managers to instruct QKD resource allocation decisions independently to QKD nodes for the FEL layers.
\end{itemize}
These issues could lead to inadequate training efficiency and unstable inference performance for learning-based algorithms in privacy-preserving environments.}

To overcome the aforementioned issues, in this paper, we propose a learning-based QKD resource allocation scheme for quantum-secured FEL systems, which is strengthened by federated reinforcement learning. In particular, we use the model-free off-policy soft actor-critic (SAC)~\cite{haarnoja2018soft} structure to learn the optimal QKD resource allocation strategy. For each QKD manager and each controller, a policy network is adopted to configure the QKD nodes by learning the allocation strategy during the interaction with quantum-secured FEL systems. Moreover, a Q-network is adopted as a critic of each QKD manager and controller to evaluate the state-action values of its local policy, i.e., the performance with the local policy. To avoid direct reward sharing, QKD managers encrypt the Q-networks and then share them with QKD controllers for their local policy evaluation and improvement. In this way, the incomplete experience issues of QKD controllers can be addressed, thus improving the training efficiency of agents in the control and management layer. In addition, to further improve convergence efficiency, the local policy of QKD controllers is aggregated as the global QKD resource allocation policy for QKD managers after improving the local policies of QKD controllers.
Our contributions can be summarized as follows.
\begin{itemize}
    \item  We propose a new hierarchical architecture for quantum-secured FEL systems to resolve uncertain factors in global model aggregation while providing ITS transmission of public keys and models. This architecture is capable of protecting the transmission of FEL models from external and participant attacks.
    \item In the proposed architecture, unlike deterministic linear programming, we formulate the optimization problem as stochastic programming to resolve the uncertainty of the security demand in quantum-secured FEL systems, i.e., the required secret-key rates. Considering the dynamic factors in the global aggregation of FEL, such as the number of FEL workers, the proposed model aims to minimize the deployment cost of the systems.
    \item To solve the proposed stochastic model without complete information, we proposed a federated DRL scheme that allows QKD managers and controllers to make the optimal decision independently based only on their local partial observation. Specifically, the proposed scheme enables QKD managers and controllers to learn a global policy collaboratively while maintaining their experiences in local replay buffers. Therefore, the proposed scheme can learn a synthetic QKD resource allocation policy efficiently without prior knowledge while preserving the privacy of the learning agents.
    \item Extensive experiments demonstrate the effectiveness of stochastic and learning-based resource allocation schemes for quantum-secured FEL systems. The performance evaluation results illustrate that existing baselines in the Quantum Internet with average or random demand do not lead to acceptable solutions that are significantly inferior to the proposed schemes.
\end{itemize}

We organize the rest of this paper as follows. In Section~\ref{sec:related}, we provide a review of the related works. In Section~\ref{sec:system}, we discuss the system model. In Section~\ref{sec:optimization}, we discuss the proposed optimization solution approach. In Section~\ref{sec:algorithm}, we propose the federated deep learning-based algorithm. Finally, we conduct the simulation experiments in Section~\ref{sec:experiment} and conclude in Section~\ref{sec:conclusions}. The abbreviations and definitions used in this paper are summarized in Table \ref{table:abb}.

\section{Related Works}\label{sec:related}
\subsection{Federated Edge Learning}
Due to the enormous volume of data generated at the network edge~\cite{yang2021reconfigurable, eldar2022machine, xu2021learning}, FEL has emerged as a promising paradigm of distributed privacy-preserving learning to improve the efficiency and security of communication and sensing for edge networks~\cite{kang2022communication,li2022novel,chaccour2022seven}. To illustrate the effectiveness and efficiency of FEL, Xu \textit{et al.} in \cite{xu2022edge} provided a systematic overview for the convergence of edge networks and learning. They highlight the potential benefits of learning-based communication systems, such as semantic communications, and the necessity of sustainable resource allocation for edge learning systems.  For instance, Hardy \textit{et al.}~\cite{wang2019adaptive} proposed a two-stage federated end-to-end learning system with performance comparable to centralized learning systems, including privacy-preserving local dataset adaptation and federated logistic regression over intermediate results encrypted with additive homomorphic encryption (HE) schemes. To address the problems of multi-view sensing observations in distributed wireless sensing, Liu \textit{et al.}~\cite{liu2022vertical} proposed a vertical FEL system for cooperative detection while preserving the data privacy of sensors. By considering the non-cooperative nature of FEL participants at the edge, Lim \textit{et al.}~\cite{lim2021dynamic} proposed a hierarchical framework, including the evolutionary game and the Stackelberg game, for edge association and resource allocation problems in FEL systems.  In addition, to motivate data owners to participate in FL, Zhan \textit{et al.} in \cite{zhan2021survey} provided a comprehensive survey on how to design proper incentive mechanisms for different federated learning algorithms in heterogeneous edge networks. Considering that conventional key distribution schemes in secure FEL systems are no longer secure, Huang \textit{et al.}~\cite{huang2021starfl} proposed a new architecture for federated learning systems in the Quantum Internet called StarFL, which uses satellite and quantum key distribution schemes to distribute public keys for FEL workers with provable security.

\subsection{The Quantum Internet}

\begin{table}[t]
	\setlength{\abovecaptionskip}{5pt}
	\setlength{\belowcaptionskip}{5pt}
	\renewcommand{\arraystretch}{1.3}
	\caption{Abbreviations and definitions.}
	\label{table1}
	\centering
	\begin{tabular}{l l}
		\toprule
		Abbreviations & Definitions     \\
		\hline
        FEL           & Federated Edge Learning            \\
        QKD           & Quantum Key Distribution           \\
        AI            & Artificial Intelligence            \\
        MPC           & Multi-party Computation            \\
        TEE           & Trusted Execution Environment      \\
        HE & Homomorphic Encryption\\
        BB84          & Bennett-Brassard-1984              \\
        GG02          & Grosshans-Grangier-2002            \\
        MDI-QKD       & Measurement Device-Independent QKD \\
        ITS & Information-Theoretic Security\\
        DRL & Deep Reinforcement Learning\\
        SAC           & Soft Actor-Critic                  \\
        KM           & Key Management \\
        GKS & Global Key Server\\
        LKM & Local Key Manager\\
        QTs & Quantum Transmitters\\
        QRs & Quantum Receivers\\
        SIs & Security Infrastructures \\
        MUX/DEMUX & Multiplexing/Demultiplexing\\
        POMDP & Partially Observable Markov Decision Process\\
        SAC & Soft Actor-Critic\\
        QBN & QKD Backbone Networking\\
        EVF & Expected Value Formulation\\
        SIP & Stochastic Integer Programming\\
		
		\bottomrule
	\end{tabular}\label{table:abb}
\end{table}

The Quantum Internet~\cite{wehner2018quantum} connects quantum devices through quantum channels to provide long-term protection and future-proof security for  the transmission of confidential information. It is expected that the Quantum Internet can provide new networking technologies for numerous critical applications by fusing quantum signal transmission with the classical communication channels. Aiming to provide secure connectivity for mission-critical applications in the real world, Toudeh-Fallah \textit{et al.}~\cite{toudeh2022paving} established the first 800-Gbps quantum-secured optical channel up to 100 km, which is able to secure 258 data channels 
under AES-256-GCM with the quantum key update rate of one per second. By combining 700 fiber-based QKD links and two high-speed satellite-based QKD links, Chen \textit{et al.}~\cite{chen2021integrated} developed an integrated space-to-ground quantum communication network with total coverage of 4,600 km. However, the costly quantum devices and the non-scalable routing schemes are obstacles to the large-scale deployment of the Quantum Internet. To address the scalability issues, Mehic \textit{et al.} \cite{mehic2019novel} proposed a routing protocol for the Quantum Internet that considers geographic distance and connection state to achieve high scalability by minimizing cryptographic key consumption. Meanwhile, For 
quantum-secured communications over backbone networks, Cao \textit{et al.}~\cite{cao2019cost} developed a programming-based resource allocation model and a heuristic algorithm to address the deployment cost-minimized problem efficiently.

\subsection{Learning-based QKD Resource Allocation Schemes}

Quantum key distribution is one of the mature applications of the Quantum Internet that has already been applied in some commercial scenarios~\cite{toudeh2022paving}. However, QKD resources in the Quantum Internet still require efficient allocation schemes to bridge the gap between the low key generation rate of the Quantum Internet and the uncertain key demands of quantum-secured communication services. To fill this gap, Zuo \textit{et al.}~\cite{zuo2020reinforcement} proposed a reinforcement learning-based algorithm to learn the optimal QKD resource allocation strategy for management of the quantum key pool in a cost-effective way. To address the dynamic arrival problem in multi-tenant QKD deployment, Cao \textit{et al.}~\cite{cao2020multi} proposed a reinforcement learning-based algorithm to reduce the deployment cost of QKD resources. However, these learning-based approaches consider QKD managers and controllers can share their observations and experiences without privacy concerns, which is infeasible in privacy-preserving systems. Therefore, in this paper, we propose a privacy-preserving learning-based resource allocation to minimize the deployment cost under uncertain security requirements of FEL workers and model owners in quantum-secured FEL systems.

\section{System Model}\label{sec:system}
In this section, we first describe the proposed hierarchical architecture for quantum-secured FEL systems. We also describe the workflow in detail and give the complexity and security analysis. Based on the proposed architecture, we illustrate the network model, cost model, and uncertainty in quantum-secured FEL systems. As shown in Fig.~\ref{fig:quantum-secured FEL}, we consider a hierarchical architecture for quantum-secured FEL systems. In the FL layer, the cluster heads organize FL workers to join federated learning for model owners. The FL model encryption and transmission among cluster heads are secured by quantum cryptography. In the control and management layer, cluster heads initiate requests for quantum secret-keys to the centralized QKD manager. Upon receiving these secret-key requests, each QKD manager first queries secret-key status and then sends configurations to QKD controllers via the simple network management protocol (SNMP). According to the received instructions, QKD controllers handshake with QKD nodes for detailed configurations. In the QKD infrastructure layer, there are three types of nodes (i.e., QKD nodes, trusted relays, and untrusted relays) and two types of links (i.e., key management (KM) and QKD links). We consider that the FEL nodes are co-located with the QKD nodes and that different types of links can be multiplexed within a single fiber~\cite{cao2021hybrid}. Therefore, the topology of the QKD layer follows that of the FEL layer, which can be denoted by $G(\mathcal{V},\mathcal{E})$. Here, $\mathcal{V}$ represents the set of FEL/QKD nodes, and $\mathcal{E}$ denotes the collection of fiber connections. Within a QKD node, there is a global key server (GKS), a local key manager (LKM), and one or more quantum transmitters (QTs). Between multiple QKD nodes, a QKD chain based on mixed relays can be used for global secret-key generation, where the trusted relays contain two or twice as many QTs~\cite{cao2021hybrid}, an LKM, and security infrastructures (SIs), while the untrusted relays consist of one or more quantum receivers (QRs). For links in the QKD layer, $\sigma$ denotes the available wavelengths on QKD links and $\kappa$ those on KM links.

\begin{figure}[t]
    \centering
    \includegraphics[width=1\linewidth]{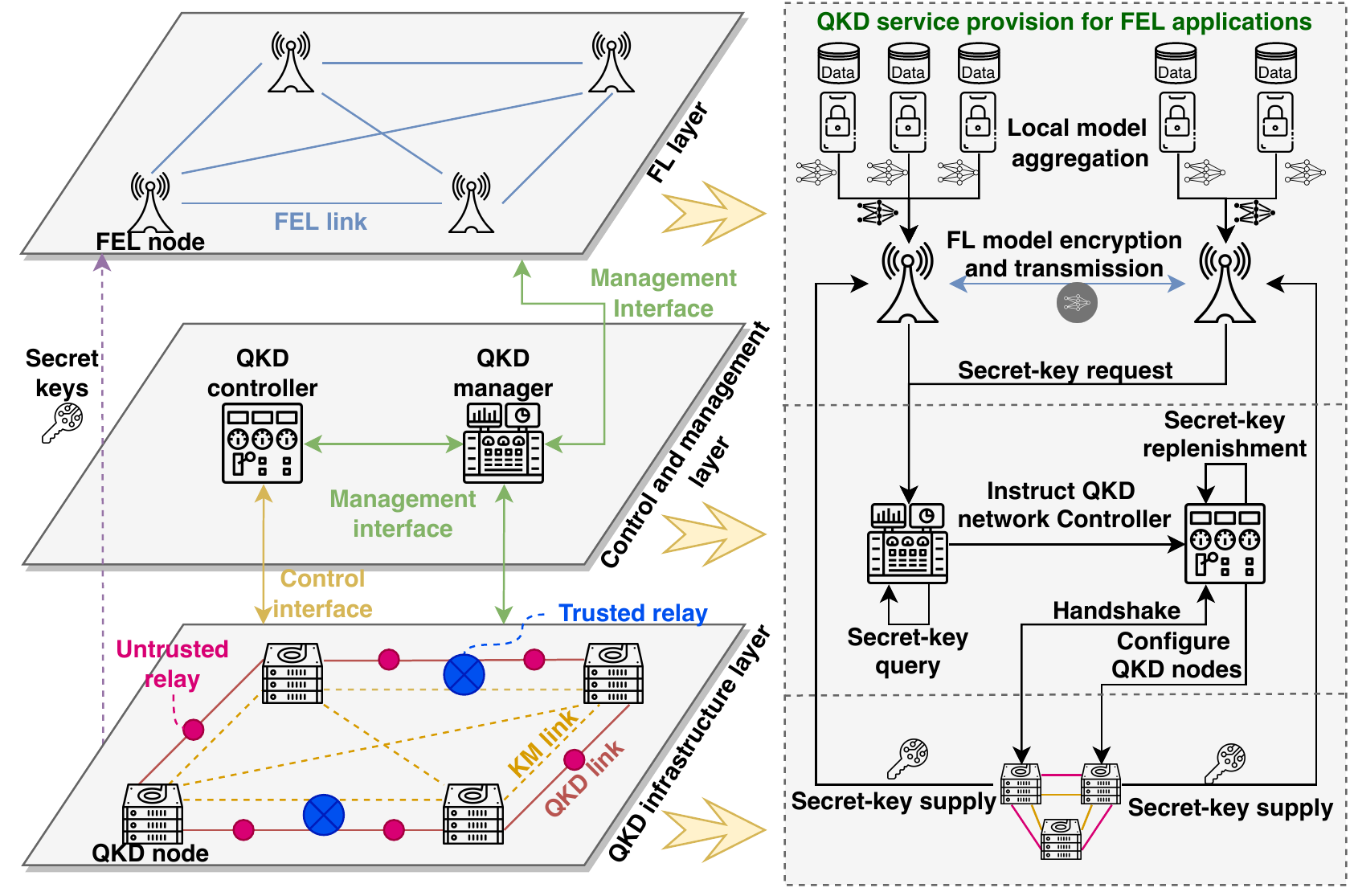}
    \caption{An overview of quantum-secured federated edge learning in the Quantum Internet}
    \label{fig:quantum-secured FEL}
\end{figure}

\subsection{Quantum-secured Federated Edge Learning}
\label{sec:FLoverQKD}
As illustrated in Fig.~\ref{fig:quantum-secured FEL}, the proposed hierarchical architecture for quantum-secured FEL systems includes three layers \cite{cao2022evolution}, i.e., the FEL layer, the control and management layer, and the QKD infrastructure layer. In the FEL layer, a set of FEL workers denoted by $\mathcal{K}=\{1,\dots,k,\dots,K\}$  at the edge networks, i.e., end devices owning training datasets, participate in FEL tasks initialized by the set of model owners $\mathcal{O}=\{1,\dots,o,\dots,O\}$ with the aid of the set of cluster heads $\mathcal{J}=\{1,\dots,j,\dots,J\}$ (e.g., base stations)~\cite{lim2021decentralized}. Suppose there is a subset of FEL workers, i.e., $K_j$ FEL workers~\cite{yang2019federated}, training a global model collaboratively with $S_o$ data samples $\{x_i,y_i\}_{i=1}^{S_o}$ for the model owner $o$. Moreover, the feature space $x_i\in\mathbb{R}^{1\times d}$ is distributed exclusively among FEL workers. The data samples in the local dataset of the FEL worker $k$ within the cluster $j$ can be denoted by $\{x_i^k \in \mathbb{R}^{1\times d_k}\}^{K_j}_{k=1}$, where $d_k$ is the dimension of its feature space. Without loss of generality, the number of FEL workers with labeled data is set to one in this paper, which is the FEL worker $K_j$. Therefore, the dataset with only features of FEL worker $k$ can be denoted as $\mathcal{D}^k_i \triangleq\{x_i^k\}$ for $k=1,\dots,K_j-1$. Meanwhile, let $\mathcal{D}_i^{K_j}\triangleq \{x_i^{K_j}, y_i^{K_j}\}$ denote the dataset of FEL worker $K_j$ with features and labels. 
Let $\theta_k\in \mathbb{R}^{d_k}$ denote the model parameters of the FEL worker $k$ and $\Theta_{j} = [\theta_1,\dots,\theta_{K_j}]$ be the union of model parameters of all the FEL workers. The goal of FEL is to minimize inference loss through collaborative training of the optimal global model $\Theta^*$ for model owners  that minimizes the loss function $\sum_{j=1}^{J} f_j(\sum_{k=1}^{K_j}\theta_k x^k, y^{K_j};\Theta_j)$.
As illustrated in Fig. \ref{fig:workflow}, the training process of quantum-secured FEL systems in the proposed hierarchical architecture consists of five main steps~\cite{yang2019federated, cheng2021secureboost, liu2019communication}:

\begin{itemize}
    \item \textbf{Step 1: Secure Communication Setup.} 
In quantum-secured FEL systems, the model owner $o$ initializes its FEL task to the FEL worker $k$ and distributes the initialized model parameters to the FEL workers via the cluster head $j$. First, the cluster head $j$ needs to establish secure communication channels with the model owner $o$ and the FEL worker $k$. To this end, the cluster head $j$ sends QKD requests to QKD managers to establish quantum-secured communication channels with the model owner $o$ and the FEL worker $k$, respectively. In detail, there are three main steps to distribute the quantum key $k^Q_{(j,k)}$ and $k^Q_{(j,o)}$ between the corresponding QKD nodes of the FEL worker $k$ and model owner $o$ via quantum channels, including transmitting the quantum bits, sifting the received bits, and estimating the error rate~\cite{bb84}. In this way, the cluster heads can establish quantum-secured communication channels using the quantum keys with model owners and FEL workers;
    
    \item \textbf{Step 2: Private Set Intersection.} In FEL systems, FEL workers have different data features and samples in their local datasets. Therefore, the model owner $o$ must find a common set of data samples $S_m$ for all participating FEL workers. The model owner can apply the \textit{cross-database intersection}~\cite{liang2004privacy} to match the data samples among FEL workers while preserving their privacy. Specifically, the cluster head $j$ applies the AES03 intersection protocol and blind signatures by generating a pair of RSA public and private keys $(e^{\textrm{RSA}}_j,d^{\textrm{RSA}}_j,n^{\textrm{RSA}})$ and using the encrypted public key $ y_{j,k}(e^{\textrm{RSA}}_j) = E_{k^Q_{j,k}}(e^{\textrm{RSA}}_j)$ to the FEL worker $k$ and $ y_{j, o}(e^{\textrm{RSA}}_j) = E_{k^Q_{j,o}}(e^{\textrm{RSA}}_j)$ to the model owner $o$ over quantum-secured communication channels. Then FEL workers and the model owner decrypt the public RSA key using their quantum keys for the intersection of the private set. By using the intersection of the private sets, the common data samples $S_o$ are discovered by the model owner $o$ without revealing the private information of FEL workers;
    
    \item \textbf{Step 3: Local Model Forward Propagation.} After determining the common data samples among FEL workers of the cluster head $j$, the FEL workers use their common local datasets $\mathcal{D}^{K_j}=\cup_{k=1}^{K_j} \mathcal{D}^k$ to train their local models. The FEL workers $k = 1,\dots,K_j-1$ with features in their training samples can only obtain intermediate results $u_k = \theta_k x^k$ from forwarding propagation and cannot compute the loss for backward propagation. Meanwhile, the FEL worker $K_j$ with labels can obtain intermediate inference results $u_{K_j}= \theta_{K_j} x^{K_j}$ and compute the local loss $f_j(\theta_{K_j} x^{K_j}, y^{K_j})$;
    
    \item \textbf{Step 4: Global Model Aggregation.} Once forward propagation is completed, the cluster head $j$ creates an HE key pair $(e^{\textrm{HE}}_j,d^{\textrm{HE}}_j)$ and distributes the public key $e^{\textrm{HE}}_j$ to FEL workers and model owners via quantum-secured channels. Let $[[\cdot]]$ denote the operation of additive HE~\cite{yang2019federated}, the FEL workers first encrypt the intermediate results as $[[u_k]]$ and/or the encrypted loss $[[f_j(\theta_{K_j} x^{K_j}, y^{K_j})]]$ using the HE public key $e^{\textrm{HE}}_j$. Then, they transmit the encrypted intermediate inference results to model owners via the cluster headers. In this way, the FEL workers can share the intermediate inference results for global gradient and loss calculation without privacy leakage;
    
    \item \textbf{Step 5: Local Model Back Propagation.} The cluster head $j$ decrypts the received intermediate results with the private key $d^{ \textrm{HE} }_j$. Then, the cluster head sends the results to the model owners and FEL workers in the quantum-secured channels. Finally, the model owner $o$ and the FEL workers decrypt and unmask the intermediate results and update their local model parameters accordingly.
\end{itemize}
\begin{figure}[t]
    \centering
    \includegraphics[width=0.8\linewidth]{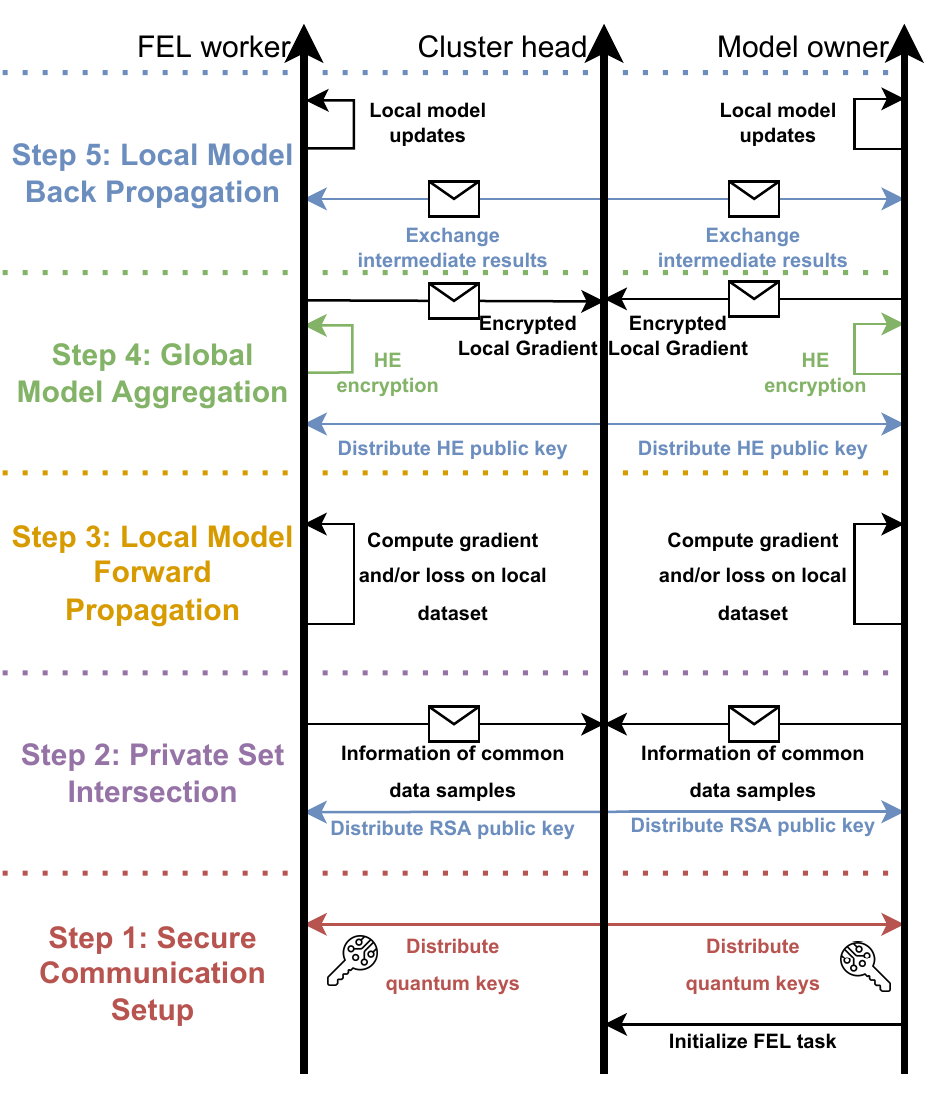}
    \caption{The workflow of quantum-secured FEL. The black lines denote the classical communication links, the red lines denote the quantum links, and the blue lines denote the quantum-secured communication links.}
    \label{fig:workflow}
\end{figure}
During the training phase and the inference phase of FEL, there are four pillars of privacy for model owners and FEL workers. First, data privacy is the biggest concern of FEL workers. Next, model privacy, including model architecture privacy and model weight privacy, is essential for model owners that must not be stolen by any malicious participants. The compromise of model privacy makes the model owner have little motivation to improve the edge learning model performance, as their trained model can be easily sniffed or stolen by other participants. On the one hand, the privacy in the input stage means that only the permitted FEL workers can input data to the model. On the other hand, privacy in the output stage means that the output of the model is only visible to the model owners. Following the literature definition in \cite{huang2021starfl}, we can have the following remarks.
\begin{remark}[Privacy-preserving Distributed Machine Learning]
    The quantum-secured FEL systems are privacy-preserving distributed ML systems with no data and model privacy leakage during the input and output stages of the training phase and the inference phase.
\end{remark}
During the input stage, FEL workers train their local models to fit their local datasets. They keep their training data local and then use the HE public key $e^{\textrm{HE}}$ received from quantum-secured communication channels to encrypt the intermediate inference results. The local gradient encrypted with the public key can only be decrypted by the model owner who holds the private key. Therefore, in the input stage, the privacy of the data and models is well-protected.

During the output stage, the cluster heads decrypt the received intermediate results by using its private key of HE $d^{\textrm{HE}}$. On the one hand, the privacy of the data is preserved by the private set overlap in step 2, since no one can interfere with the labels of FEL workers. On the other hand, the privacy of the models is enhanced since the public keys are distributed through quantum-secured communication channels so that no eavesdropper can intercept the local model updates and obtain the model architecture and model parameters from the encrypted models after encryption.


\textbf{Complexity Analysis:} Let $|k^Q|$ be the length of symmetric keys distributed via quantum channels. The complexity of generating symmetric keys for $K$ FEL workers is $\mathcal{O}(K|k^Q|)$. During the private set intersection, the complexity to generate RSA key pairs for FEL workers is $\mathcal{O}(\log^2(n^{\textrm{RSA}}))$~\cite{rivest1978method}. Moreover, the complexity of the encryption and the decryption operations is $\mathcal{O}(K\log(|e^{\textrm{RSA}}|)\log^2(n^{\textrm{RSA}})+\log^3(n^{\textrm{RSA}}))$ for K FEL workers. In FEL, the Paillier HE scheme~\cite{paillier1999public} is usually adopted for additive HE. Therefore, the algorithmic complexity of HE encryption and decryption are both $\mathcal{O}(S \log(n^{\textrm{HE}}))$.

\textbf{Security Analysis:} {Here, we can give external and participant attack examples, i.e., eavesdropping attacks and Sybil attacks, to show that the quantum-secured FEL systems can defend against common attacks in FEL.} The first one is that the eavesdropping attacks during quantum-secured FEL training cannot pose any threat to the data and models in the system~\cite{xie2021securing}. In other words, even if there is an eavesdropper in the system, the only information obtained during the eavesdropping attacks is encrypted with the unconditional secure scheme, i.e., one-time pad~\cite{shannon1949communication}. Moreover, the eavesdroppers have no way to decrypt or obtain valuable information from the ciphertext, i.e., the encrypted messages. Second, quantum-secured FEL systems are able to resist Sybil attacks~\cite{fung2018mitigating} during the training process. Each FEL worker can only participate in FEL through one single quantum-secured channel. Therefore, FEL workers cannot forge their identities and submit multiple local models.


\subsection{Networking Model for Quantum-secured FEL Systems}
Let $\iota$ denote the distance from a QT to its connected QR. Due to the symmetrical locations of two connected QTs, the distance between them can be approximated as
\begin{equation}
    L \approx 2 \cdot \iota.
\end{equation}
At the length $L$ between two connected QTs, the maximum attainable secret-key rate is denoted by $K_L$, which is in inverse proportion to the length $L$, i.e., an increment in $L$ leads to a reduction in $K_L$~\cite{cao2022evolution}.

In the proposed quantum-secured FEL model, $\mathcal{R}$ represents the set of quantum-secured FEL model transmission requests and $r(s_r,d_r,\rho_r)\in \mathcal{R}$ denotes one quantum-secured FEL model transmission request of FEL nodes, in which $s_r$ and $d_r$ represent the source node and the destination node of quantum-secured FEL model transmission request $r$, respectively. Let $\rho_r$ be the amount of concurrent quantum-secured FEL links for satisfying the security demands (i.e., secret-key rates) of the FEL training between $s_r$ and $d_r$~\cite{cao2021hybrid}, which can be calculated as
\begin{equation}
    \rho_r = \left \lceil \frac{k_r}{K_L} \right \rceil,
\end{equation}
where $k_r$ denotes the required security demand between source node $s_r$ and destination node $d_r$.



\subsection{Cost Model and Provisioning Plans}
Let $l_{(n,m)}$ denote the length of the physical link of the model transmission request between node $n$ and node $m$.
\subsubsection{QTs and QRs}

Since each MDI-QKD requires two QTs and one QR, the amount of QTs $a_{T}^r$ and QRs $a_{R}^r$ for the model transmission request $r$ of quantum-secured FEL systems can be calculated as
\begin{equation}
    a_{T}^r = \sum_{(n,m)\in E_r} 2\cdot \rho_r \cdot \left \lceil \frac{l_{(n,m)}}{L} \right \rceil,
\end{equation}
and
\begin{equation}
    a_{R}^r = \sum_{(n,m)\in E_r} \rho_r \cdot \left \lceil \frac{l_{(n,m)}}{L} \right \rceil,
\end{equation}
where $E_r$ denotes the set of physical fibers along the transmission path of request $r$.


\subsubsection{LKMs} The requested amount of LKMs $a_{KM}^r$ for a quantum-secured FEL model transmission request $r$ is
\begin{equation}
    a_{KM}^r = \sum_{(n,m)\in E_r} \left \lceil \frac{l_{(n,m)}}{L} +1 \right \rceil.
\end{equation}
\subsubsection{Security Infrastructures} 
The requested amount of SI $a_{SI}^r$ for a quantum-secured FEL model transmission request $r$ is
\begin{equation}
    a_{SI}^r = \sum_{(n,m)\in E_r} \left \lceil \frac{l_{(n,m)}}{L} - 1 \right \rceil.
\end{equation}
\subsubsection{MUX/DEMUX Components}
Finally, let $a_{M}^r$ denote the number of MUX/DEMUX component pairs for a quantum-secured FEL model transmission request $r$ is
\begin{equation}
    a_{M}^r = \sum_{(n,m)\in E_r} \left \lceil \frac{l_{(n,m)}}{L} \right \rceil + \sum_{(n,m)\in E_r} \left \lceil \frac{l_{(n,m)}}{L} -1\right \rceil,
\end{equation}
where the number of MUX/DEMUX component pairs required by QTs and QRs are represented by the first term $\sum_{(n,m)\in E_r} \left \lceil \frac{l_{(n,m)}}{L} \right \rceil$ and the second term $\sum_{(n,m)\in E_r} \left \lceil \frac{l_{(n,m)}}{L} -1\right \rceil$, respectively. 

\subsubsection{QKD and KM Links}
In the case of this study, three wavelengths are occupied by one QKD link and three wavelengths by one KM link~\cite{wonfor2019field}. For a quantum-secured FEL model transmission request $r$, the link cost can be calculated as
\begin{equation}
    a_{Ch}^r = \sum_{(n,m)\in E_r} (3 \rho_r l_{(n,m)} + l_{(n,m)}),
\end{equation}
where the required length of QKD and KM links and FEL links can be represented by $3\rho_r l_{(n,m)}$ and $l_{(n,m)}$, respectively.
\subsubsection{Provisioning Plans and Deployment Cost}

When allocating resources for quantum-secured FEL applications in the proposed system model, the QKD manager needs to consider either a reservation plan or an on-demand plan, which is similar to the cloud and other online services~\cite{chaisiri2011optimization,chen2019prediction}. The reservation plan is for long-term allocation/subscription of QKD resources (specify what those are), while the on-demand plan is for short-term demand. If the QKD manager knows the demand of each FEL node in the FEL layer, it can provide the optimal reservation plan in the QKD layer. However, the demand of each FEL node is random as the number of workers in the FEL layers is uncertain. For example, some FEL nodes may require different secret-key rates in model transmission requests to protect their workers' models. According to the above two types of subscription plans, each available QKD resource has two corresponding subscription costs, namely reservation costs and on-demand costs. We define the cost function as monetary units (e.g., dollars) per unit of QKD resources. In the reservation phase, $\beta_{T}^b, \beta_{R}^b, \beta_{ KM }^b, \beta_{ SI }^b,\beta_{ MD }^b$, and $\beta_{Ch}^b$, are the reservation costs for the QTs, QRs, LKMs, SIs, MUX/DEMUX components, and QKD and KM links, respectively. 
Also, $\beta_{T}^o, \beta_{R}^o, \beta_{ KM }^o, \beta_{ SI }^o,\beta_{ MD }^o$, and $\beta_{Ch}^o$, are the on-demand cost for the QTs, QRs, LKMs, SIs, MUX /DEMUX components, and QKD/KM links, respectively.


\subsection{Uncertainty of Security Demands in Quantum-secured FEL Systems}

With uncertainty of requests, the amount of required QKD resources by the FEL workers is not precisely known when the resources are reserved. In quantum-secured FEL, the encryption of public keys and intermediate inference results consume a certain amount of key resources. However, cluster heads choose different numbers of FEL workers to meet model owners' requirements for model accuracy and training error. Unfortunately, the model accuracy and training error are affected by various dimensional parameters such as data volume, algorithm quality, and FEL tasks. Therefore, cluster heads cannot accurately predict the model accuracy and thus require uncertain secret-key rates in the training process of FEL models. Let $\mathcal{K}_r = \{0,1,\ldots,K\}$ represent the set of possible secret-key rates requirements of request $r\in \mathcal{R}$. The set of all possible secret-key rates of QKD nodes $\mathcal{K}$ in the QKD layer can be represented by the Cartesian product as
\begin{equation}
    \mathcal{K} = \prod_{r\in R} \mathcal{K}_r = \mathcal{K}_1 \times \mathcal{K}_2 \times \cdots \times \mathcal{K}_{|\mathcal{R}|}.
\end{equation}
The probability distributions for both secret-key rates in $\mathcal{K}$ of all secure model transmission requests $\mathcal{R}$ are considered to be known. Note that statistical processes can be used to analyze historical data and that ML methods can predict the distribution of these demands.

\section{The Proposed Optimization Models}\label{sec:optimization}
In this section, we first formulate the QKD resource allocation problem as a deterministic linear programming model. Furthermore, considering the uncertainty, i.e., the required secret-key rates, in quantum-secured FEL systems, we develop a stochastic programming model for QKD resource allocation. In the resource allocation model, the management signals are usually brief and can be encoded as quantum information and transmitted in QKD links, thus the data transmission is secured by quantum cryptography.


\subsection{Deterministic Integer Programming}

Initially, consider the case where the actual FEL security requirements of FEL nodes are precisely known, and QKD resources can be subscribed in a reservation plan, where each link has two decision variables.
\begin{enumerate}
    \item $X = \{X_{(n,m)}|(n,m)\in \mathcal{E}\}$ indicates the set of numbers of wavelengths allocated in each KM link, e.g., $X_{(n,m)} = 1$ means that the QKD manager reserves one wavelength for the a model transmission request.
    \item $F = \{F_{(n,m)}|(n,m)\in \mathcal{E}\}$ indicates the set of the numbers of wavelengths allocated in each QKD link.
\end{enumerate}

Based on the cost model and demand, the deterministic integer programming for the reservation plan can be formulated to minimize the total cost of the QKD problem as follows:
\begin{equation}
\begin{aligned}
    \min_{X^b, F^b} &C^b(X^b,F^b)
    =\sum_{(n,m)\in \mathcal{E}} \bigg[ \frac{F_{(n,m)}^b}{3}(a_{T}^b\beta_{T}^b  + a_{R}^b\beta_{R}^b)  \\&+X_{(n,m)}^b(a_{KM}^b\beta_{KM}^b  + a_{SI}^b\beta_{SI}^b + a_{MD}^b\beta_{MD}^b) \\& +  l_{i,j} (F_{(n,m)}^b + X_{(n,m)}^b)\beta_{Ch}^b\bigg],
\end{aligned}\label{eq:10}
\end{equation}
subject to:
\begin{alignat}{2}
\begin{split}
    &\sum_{j\in \mathcal{V}}\sum_{p\in \kappa} x_{(n,m),p}^r - \sum_{j\in \mathcal{V}}\sum_{p\in \kappa} x_{(m,n),p}^r \\& = \begin{cases}
  1& \text{ if } i=s_r \\
  -1& \text{ if } i=d_r \\
  0&  \text{ otherwise }
\end{cases}, \forall r\in R,\forall (n,m)\in \mathcal{E} \label{eq:11}
\end{split}\\
&X_{(n,m)}^b = \sum_{r\in R}\sum_{p\in \kappa} x^r_{(n,m),p}, \forall (n,m)\in \mathcal{E} \label{eq:12}\\
&F_{(n,m)}^b = \sum_{r\in R}\sum_{q\in \sigma} f_{(n,m), q}^r, \forall (n,m)\in \mathcal{E} \label{eq:13}\\
&\sum_{q\in \sigma} f_{(n,m),q}^r = 3 \rho_r \sum_{p\in \kappa} x_{(n,m),p}^r, \forall r\in R, (n,m)\in \mathcal{E} \label{eq:14}\\
&\sum_{j\in \mathcal{V}} f_{(n,m),q}^r  = \sum_{j\in \mathcal{V}} f_{(j,i),q}^r, \forall r\in R, i \in \mathcal{V}, q \in \sigma \label{eq:15}\\
&\sum_{j\in \mathcal{V}} x_{(n,m),p}^r  = \sum_{j\in \mathcal{V}} x_{(j,i),p}^r, \forall r\in R, i \in \mathcal{V}, p \in \kappa \label{eq:16}\\
&\sum_{r\in R}\sum_{p\in \kappa} x_{(n,m),p}^r \leq |\kappa|,\quad \forall (n,m)\in \mathcal{E} \label{eq:17}\\
&\sum_{r\in R}\sum_{q \in \sigma} f_{(n,m),q}^r \leq |\sigma|,\forall (n,m)\in \mathcal{E} \label{eq:18}\\
&\sum_{r\in R} x^r_{(n,m),p} \leq 1, \quad \forall (n,m)\in \mathcal{E}, p \in \kappa \label{eq:19}\\
&\sum_{r\in R} f^r_{(n,m),q} \leq 1, \quad \forall (n,m)\in \mathcal{E}, q \in \sigma \label{eq:20}
\end{alignat}
The flow conservation constraint Eq. \eqref{eq:11} guarantees that the QKD path flows between two distant FEL nodes is one in source/destination QKD nodes and zero in other QKD nodes. That is due to specifying a QKD path for a quantum-secured FEL request with a dedicated source and destination QKD nodes. The constraints in Eq. \eqref{eq:12} and Eq. \eqref{eq:13} guarantee that the demand is satisfied where $\sum_{p\in \kappa} x_{(n,m), p}^r$ and $\sum_{q\in \sigma} f_{(n,m),q}$ are the requested numbers of KM and QKD wavelengths in link $(n,m)$, respectively. The constraint in Eq. \eqref{eq:14} specifies the number of wavelength channels requested by the QKD and the KM links of each quantum-secured FEL model transmission request, where the QKD/KM links and the FEL links are required to take up $3\rho_r$ and one wavelength channels, respectively. The same wavelength channel constraints in Eq. \eqref{eq:15} and Eq. \eqref{eq:16} are the constraints on wavelength continuity, which ensures that the identical wavelength channel is allocated to the link on the chosen path of each secure model transmission request. The wavelength capacity constraints Eq. \eqref{eq:17} and Eq. \eqref{eq:18} guarantee that the total wavelength channels of the QKD/KM links should not exceed the available wavelength channels. The wavelength uniqueness constraints Eq. \eqref{eq:19} and Eq. \eqref{eq:20} guarantee either zero or one wavelength channel can be allocated for the secure model transmission requests.

\subsection{Stochastic Integer Programming}

The deterministic integer programming developed in Eqs. \eqref{eq:10}--\eqref{eq:20} is no longer applicable if the demand for the resources is unknown. Therefore, we describe the stochastic integer programming (SIP), which optimizes the overhead of the QKD resources allocated to all quantum-secured FEL model transmission requests. The first phase includes all reservations that must be determined before the requirements can be implemented and analyzed. It is critical for the QKD manager to reserve the number of QKD resources to be utilized before the demand is observed. In the second phase, allocations are made to accommodate real-time demand. After observing the real-time demand, if the reserved QKD link resources are less than the demand, FEL nodes have to pay for the cost of the additional QKD resources required.

The variables $X^o = \{X_{(n,m)}^{o}|(n,m)\in \mathcal{E}\}$ and $ F^o = \{F_{(n,m)}^{o}|(n,m)\in \mathcal{E}\}$ denote the sets of number of requests served in KM links and QKD links in the second stage, respectively. The expected overhead in the second stage is formulated as function $\mathbb{E}_\Omega[L(X^{o}, F^{o},\omega)]$, where $\omega \in \Omega = \mathcal{K}$ denotes the set of possible secret-key rates (called realizations, in general) observed in the second stage.

Thus, the total objective of this SIP model under uncertainty~\cite{chaisiri2011optimization} is
\begin{equation}
\begin{aligned}
    \min_{X^{b}, F^{b}, X^{o}, F^{o}} C (X^{b}, F^b)+ \mathbb{E}_\Omega[L(X^{b}, F^{b},\omega)],
\end{aligned}\label{eq:21}
\end{equation}
where
\begin{equation}
    L(X^{b}, F^{b},\omega) = \min_{Y=\{X^{o}(\omega), F^{o}(\omega)\}} L(Y) \label{eq:22}
\end{equation}
is the cost function in the second phase for given realization $\omega$. The deterministic equivalent SIP of Eqs. \eqref{eq:10}--\eqref{eq:20} for QKD resource allocation is expressed as Eqs. \eqref{eq:23}--\eqref{eq:33}.
In the optimization objective Eq. \eqref{eq:23}, there are probabilities $p(k)$, each denoting the probability of demand $k\in\mathcal{K}$ being realized. Eq. \eqref{eq:24} is the flow conservation constraint. Eq. \eqref{eq:25} and Eq. \eqref{eq:26} are demand satisfaction constraints. Eq. \eqref{eq:27} is the wavelength channel number constraints. Eq. \eqref{eq:28} and Eq. \eqref{eq:29} are the constraints to ensure each secure model transmission requests the same wavelength channel. Eq. \eqref{eq:30} and Eq. \eqref{eq:31} are wavelength capacity constraints, while Eq. \eqref{eq:32} and Eq. \eqref{eq:33} are wavelength uniqueness constraints.
\begin{figure*}[t]
\vspace{-0.2cm}
\begin{alignat}{2}
    \min_{X^{b}, F^{b}, X^{o}(k), F^{o}(k)} \quad &  C^b (X^{b}, F^b) + \sum_{k \in \mathcal{K}} p(k) \left[C^{o} \left(X^{o}(k), F^{o}(k)\right)\right] \label{eq:23}\\
	\textrm{s.t.}\quad & \sum_{j\in \mathcal{V}}\sum_{p\in \kappa} x_{(n,m),p}^r(k) - \sum_{j\in \mathcal{V}}\sum_{p\in \kappa} x_{(j,i),p}^r(k) = \begin{cases}
  1& \text{ if } i=s_r \\
  -1& \text{ if } i=d_r \\
  0&  \text{ otherwise }
\end{cases}, \quad \forall r\in R,  k\in \mathcal{K} \label{eq:24}\\
&X^{b}_{(n,m)} + X^{o}_{(n,m)}(k) \geq \sum_{r\in R}\sum_{p\in \kappa} x_{(n,m), p}^r(k), \quad \forall (n,m)\in \mathcal{E}, k\in \mathcal{K} \label{eq:25}\\
&F^{b}_{(n,m)} + F^{o}_{(n,m)}(k) \geq \sum_{r\in R} \sum_{q\in \sigma} f_{(n,m), q}^r(k), \quad \forall (n,m)\in \mathcal{E}, k\in \mathcal{K} \label{eq:26}\\
&\sum_{q\in \sigma} f_{(n,m),q}^r(k) = 3 \rho_r(k) \sum_{p\in \kappa} x_{(n,m),p}^r(k), \quad \forall r\in R, (n,m)\in \mathcal{E}, k\in \mathcal{K} \label{eq:27}\\
&\sum_{j\in \mathcal{V}} f_{(n,m),q}^r(k)  = \sum_{j\in \mathcal{V}} f_{(j,i),q}^r(k), \quad \forall r\in R, i \in \mathcal{V}, q \in \sigma, k\in \mathcal{K} \label{eq:28}\\
&\sum_{j\in \mathcal{V}} x_{(n,m),m}^r(k)  = \sum_{j\in \mathcal{V}} x_{(j,i),p}^r(k), \quad \forall r\in R, i \in \mathcal{V}, p \in \kappa, k\in \mathcal{K} \label{eq:29}\\
&\sum_{r\in R}\sum_{p\in \kappa} x_{(n,m),p}^r(k) \leq |\kappa|,\quad \forall (n,m)\in \mathcal{E}, k\in \mathcal{K} \label{eq:30}\\
&\sum_{r\in R}\sum_{q \in \sigma} f_{(n,m),q}^r(k) \leq |\sigma|,\quad \forall (n,m)\in \mathcal{E}, k\in \mathcal{K} \label{eq:31}\\
&\sum_{r\in R} f^r_{(n,m),q}(k) \leq 1, \quad \forall (n,m)\in \mathcal{E}, q \in \sigma, k\in \mathcal{K} \label{eq:32}\\
&\sum_{r\in R} x^r_{(n,m),p}(k) \leq 1, \quad \forall (n,m)\in \mathcal{E}, p \in \kappa, k\in \mathcal{K} \label{eq:33}
\end{alignat}
\hrulefill
\vspace{-0.2cm}
\end{figure*}
\section{The Proposed Learning-based Scheme}\label{sec:algorithm}

In this section, we formulate the above QKD resource allocation problem as a learning task. In particular, we model the stochastic model as a partially observable Markov decision process (POMDP) for QKD managers and controllers. Based on the properties of the POMDP, we design a learning-based QKD resource allocation scheme based on federated reinforcement learning~\cite{xu2021multi,wang2020attention} to explore the optimal solution without sharing the privacy experiences. Moreover, we give the convergence analysis and the complexity analysis of the proposed scheme.

\subsection{POMDP for quantum-secured FEL systems}

\subsubsection{State Space} The state space $s(t) \in \mathcal{S}$ of the QKD networks can be represented by the reservation strategy and the on-demand deployment of the QKD manager and QKD controllers at $t$th slot. Therefore, the state of QKD networks is defined as $s(t) := [X^b(t-l), F^b(t-l),\dots, X^b(t-1), F^b(t-1)]$. This indicates that the state of the QKD network is composed of the past $l$ reservation strategies and on-demand deployment. Therefore, for each QKD controller, their observation is a part of the global state, which can be denoted as $s_n(t) := [X^b_n(t-l), F^b_n(t-l),\dots , X^b_n(t-1), F^b_n(t-1)]$, $n = 1,\dots,N$. Without loss of generality, we consider that one QKD manager is co-located with one QKD controller but only some of the QKD controllers are QKD managers. Therefore, the observation of QKD manager $m$ can be denoted as $s_m(t) := [X^b_m(t-l), F^b_m(t-l),\dots, X^b_m(t-1), F^b_m(t-1)]$, for $m = 1,\dots,M$.
\subsubsection{Action Space}
The action space $a(t)=[a_1(t),\dots,a_N(t)]\in\mathcal{A}$ is the reservation strategy profile of QKD resource allocation deployment at time slot $t$, i.e., $a_n(t) = [X^b_n(t),F^b_n(t)], n=1, \dots, N$.
\subsubsection{State Transition Probability Function}
The state transition probability function $\mathcal{P}:\mathcal{S}\times\mathcal{A}\times\mathcal{S}\rightarrow [0,1]$ represents the changing rules of uncertain factors. The training environment transitions to the next moment of state based on the current state of the environment and the action input by the training agent, calculated by the probability function for the state transition. Therefore, we can assume $s(t+1) = s'$ and the probability of transitioning to this state is $P(s'|s,p)$, where $s(t)=s$ and $p(t)=p$. The agent's policy can change the trajectory of the state transition.
\subsubsection{Reward}
In the deployment cost minimization problem defined in Eq. (\ref{eq:23}), the reward in time slot $t$ is the ratio of near-minimal deployment cost to current deployment cost. However, the reward is only observable by the QKD manager in the Quantum Internet. Therefore, the reward function is always zero for all QKD controllers, i.e., $R_n(s(t),a(t))=0, n=1,\dots, N$. Moreover, the reward function can be expressed as the normalized cost
$R_m( s(t), a(t)) = \frac{C^b (X^{b}, F^b) + \sum_{k \in \mathcal{K}} p(k) \left[C^{o} \left(X^{o}(k), F^{o}(k)\right)\right]}{C^b (X^{b,*}, F^{b,*}) + \sum_{k \in \mathcal{K}} p(k) \left[C^{o,*} \left(X^{o,*}(k), F^{o,*}(k)\right)\right]}$ for the QKD manager $m$.
\subsubsection{Q-function} Let $\gamma$ be the discounting factor that determines how much QKD managers care about future cumulative returns versus immediate rewards. Therefore, the Bellman equation for $Q^\pi$ w.r.t the policy $\pi$ is
\begin{equation}
    \begin{aligned}
        Q^{\pi}(s(t),a(t)) = &\underset{s(t+1),a(t+1) \sim \mathcal{P^\pi}}{\mathbb{E}}\bigg[R\big(s(t),a(t)\big) \\& + \gamma(Q^{\pi}(s(t+1),a(t+1)) \\&+ \tau H\big(\pi(\cdot|s(t+1)) \big)\bigg],
    \end{aligned}
\end{equation}
where $H(\pi(\cdot|s(t+1))) = - \log \pi(\cdot|s(t+1))$ is the expected entropy for policy $\pi$ in state $s(t+1)$. To determine the relative importance of the entropy term, $\tau$ is the additive temperature parameter that help affects the optimal policy to determine its stochasticity. Therefore, the Q-function in the SAC algorithm~\cite{haarnoja2018soft} can be expressed as
\begin{equation}
    \begin{aligned}
        Q^{\pi}(s(t),a(t)) = &\underset{s(t+1),a(t+1) \sim \mathcal{P^\pi}}{\mathbb{E}}\bigg[R\big(s(t),a(t)\big) \\& + \gamma(Q^{\pi}(s(t+1),a(t+1)) \\&- \tau \log\big(\pi(\cdot|s(t+1)) \big)\bigg].
    \end{aligned}
\end{equation}

\subsection{Federated DRL-based QKD Allocation Scheme}

To help each QKD manager and controller evaluate its QKD resource allocation function $\pi(\cdot)$ parameterized by $\varphi$, a Q-function $Q(\cdot,\cdot;\vartheta)$ parameterized by a neural network $\vartheta$ is developed. After fitting the Q-network to the allocation policy, the policy network can be trained using the Q-network via the policy gradient algorithm. The Q-network can specify the expected returns for the actions performed by the policy network. In this way, the probability of actions leading to lower expected QKD deployment costs is increased. The probability of actions leading to higher expected costs decreases until the policy converges to the Q-network. The learning agent needs to repeat this process until it converges to the optimal policy.
At each local iteration of the learning agents, training experiences are sampled randomly from the replay buffers of QKD managers or QKD controllers to update the network parameters of policy or Q-networks, respectively.

\subsubsection{Local Policy Iteration for QKD Managers}
The parameters $\vartheta_{m,i}, i=1,2$ in double Q-networks of the QKD manager $m$ is updated by minimizing the difference between the output of Q-networks and the target is performed via gradient descent as follows:
\begin{equation}
    \begin{aligned}
        \epsilon_{m,i}^{e+1} = \arg\min_{\vartheta_{m,i}}\frac{1}{|B_m|} \sum_{(s(t),a(t))\sim B_m} &\bigg[Q(s(t),a(t);\vartheta_{m,i})\\&-y_m(s(t),a(t))\bigg]^2,
    \end{aligned}\label{eq:mp}
\end{equation}
where the target $y_m(s(t),a(t))$ is given by
\begin{equation}
    \begin{aligned}
        y_m(s(t),a(t)) = &R_m\big(s(t),a(t)\big) \\&+ \gamma\bigg[\min_{i=1,2}Q(s(t+1),\tilde{a}(t+1);\vartheta_{m,i,targ}) \\&-\tau\log \pi(\tilde{a}(t+1)|s(t+1);\varphi_m)\bigg],
    \end{aligned}
\end{equation}
for $i=1,2$ and $\tilde{a}(t+1) \sim\pi(\cdot|s(t+1);\vartheta_m)$.

Moreover, the policy of the learning agent in the proposed scheme should, in each state, maximize the expected future return plus expected future entropy. To this end, by utilizing the min-double-Q trick~\cite{hasselt2010double} and the entropy loss term, the policy network of QKD manager $m$ is updated to maximize the trained Q-network as follows:
\begin{equation}
\begin{aligned}
    \varphi^{e+1}_{m} = & \arg\max_{\varphi_{m}}\frac{1}{|B_{m}|} \sum_{s\in B_{m}}\bigg[\min_{i=1,2} Q(s,\tilde{a}(s;\varphi_{m});\vartheta_{m,i}) \\& - \tau \log(\pi(\tilde{a}(s;\varphi_{m})|s;\varphi_{m})) \bigg],
\end{aligned}\label{eq:mq}
\end{equation}
where $\tilde{a}$ is reparameterized from a squashed Gaussian policy w.r.t. the mean $\epsilon(s;\varphi_{m})$ and variance $\sigma(s;\varphi_{m})$ output from the policy network $\varphi_{m}$ as follows:
\begin{equation}
    \tilde{a}(s,\xi;\varphi_{m});\varphi_{m}) = \tanh\left( \epsilon(s;\varphi_{m}) + \sigma(s;\varphi_{m}) \odot \xi \right),
\end{equation}
and $\xi$ is an input noise vector sampled from some fixed distribution.

\begin{figure}[t]
\centering
\includegraphics[width=1\linewidth]{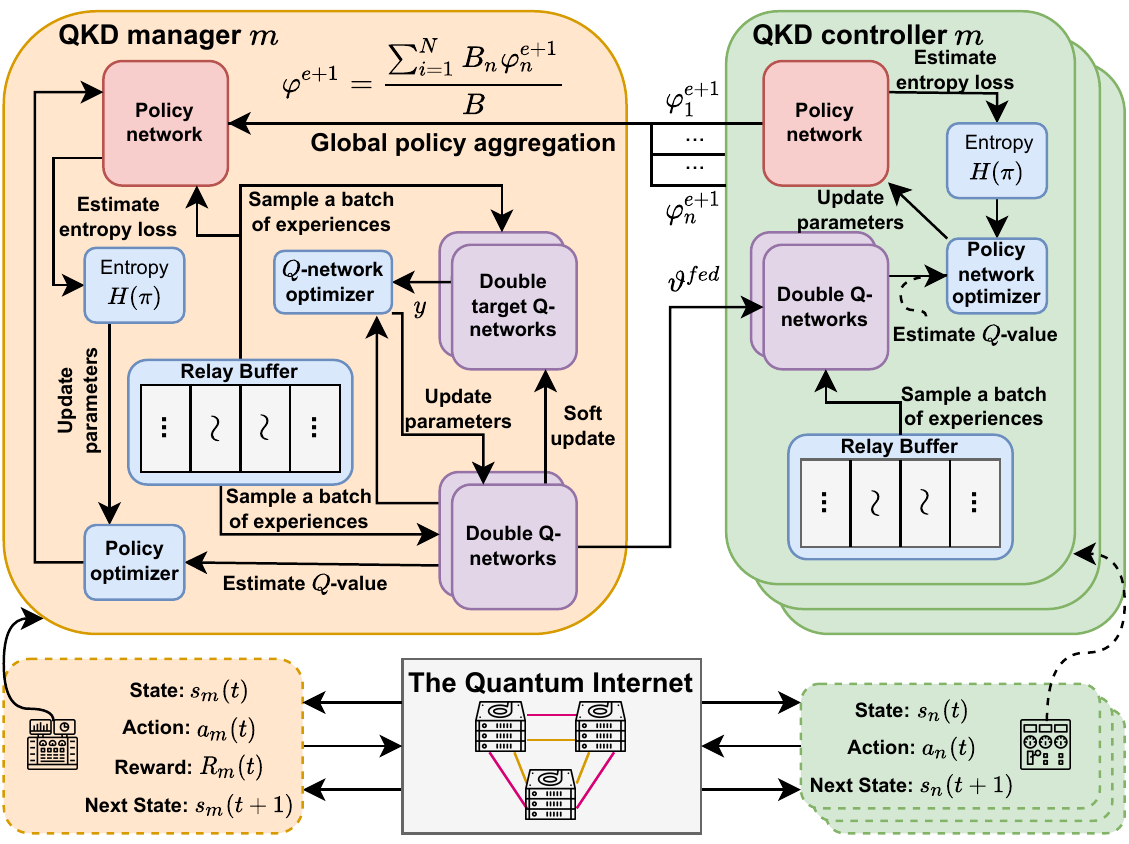}
\caption{The proposed learning-based resource allocation scheme for the Quantum Internet.}
\label{fig:algorithm}
\end{figure}

Finally, the target Q-networks are updated with $\varphi^{e+1}_{m,i,targ} \leftarrow \zeta \vartheta_{m,i,targ}^{e} + (1-\zeta)\vartheta_{m,i}^{e+1}$, for $i = 1,2$.
\subsubsection{Local Policy Iteration for QKD Controllers}
Since there is no replay buffer in the QKD controller that can be trained with Q-network, QKD controllers cannot train their policy networks independently. Fortunately, QKD controllers can borrow the trained Q-network from QKD managers to facilitate the update of its local policy, i.e. $\vartheta_{i}^{e, fed}\leftarrow \vartheta_{m,i}^e$, for $i=1,2$.
We then update the policy network of QKD controller $n$  by maximizing the objective via gradient ascent as follows:
\begin{equation}
\begin{aligned}
    \varphi^{e+1}_{n} = & \arg\max_{\varphi_{n}}\frac{1}{|B_{n}|} \sum_{s\in B_{n}}\bigg[\min_{i=1,2} Q(s,\tilde{a}(s;\varphi_{n});\vartheta_{i}^{e, fed}) \\& - \tau \log(\pi(\tilde{a}(s;\varphi_{n})|s;\varphi_{n})) \bigg],
\end{aligned}\label{eq:cp}
\end{equation}
where $\tilde{a}(s;\varphi_{n})$ is a sample from $\pi(\cdot|s;\varphi^{e}_{n})$ which is differentiable w.r.t. $\varphi^{e}_{n}$ via the reparametrization trick~\cite{haarnoja2018soft}.

\subsubsection{Global Policy Aggregation}

	Without exposing the private information, the global policy is aggregated from the local QKD allocation policy while maintaining their private experiences in their local replay buffers. The global aggregation process via FedAvg~\cite{mcmahan2017communication} of local policy networks can be presented as:
\begin{equation}
\varphi^{e+1} = \frac{\sum_{i=1}^{N}B_n\varphi_n^{e+1}}{B},\label{eq:agg}
\end{equation}
where $B = \sum_{n=1}^{N}B_n$ is the number of total incomplete experiences in the replay buffers of QKD controllers. And the updated global policy $\varphi^{e+1}$ is used for next-episode training. Algorithm 1 summarizes the proposed intelligent resource allocation scheme based on federated DRL for quantum-secured FEL systems.

\subsection{Convergence Analysis and Complexity Analysis}

Denote $\varphi^*$ as the corresponding critic under an optimal QKD resource allocation policy. We have the following assumptions\cite{wang2020federated}:
	\begin{assumption}
		\label{as1}
		For all $C_n(\varphi_n)$,
		\begin{itemize}
			\item $C_n(\varphi_n)$ is convex;
			\item $C_n(\varphi_n)$ is $\varepsilon$-smooth, i.e., $C_n(\varphi_n')\leq C_n(\varphi_n)+\nabla C_n(\varphi_n)\cdot(\varphi_n'-\varphi_n)+\frac{\varepsilon}{2}\left \| \varphi_n - \varphi_n'\right \|^2$, for $\forall \varphi_{n}$ and $\varphi_{n}'$.
		\end{itemize}
	\end{assumption}

	\begin{algorithm}[t]
		\caption{The proposed intelligent resource allocation scheme based on federated DRL}
		\label{alg1}
		{\small
			Initialize $\varphi_m, \vartheta_{m,1}, \vartheta_{m,2}, \vartheta_{1}^{fed}, \vartheta_{2}^{fed}, B_n, B_m$.\\
			\For{Episode $e \in 1, \ldots, E$}
			{
			\For{each decision slot $t$}{
			    The QKD manager $m$ and QKD controller $n$ observe $s_m(t)$ and $s_n(t)$, respectively.\\
			    Input $s_m(t)$ and $s_n(t)$ to policy network and obtain the reservation strategy profile $a_n(t)$ of QKD resource allocation.\\
			    The QKD manager $m$ store $(s_m(t), a_m(t), R_m, s_m(t+1))$ in $B_m$ and the QKD controller $n$ store $(s_n(t), a_n(t), s_n(t+1))$ in $B_n$.\\
			}
			\For{each local policy iteration}{
			The QKD managers update the policy networks according to Eq. (\ref{eq:mp}) and the Q networks according to Eq. (\ref{eq:mq}).\\
			The QKD managers updates target Q-networks with $\varphi^{e+1}_{m,i,targ} \leftarrow \zeta \vartheta_{m,i,targ}^{e} + (1-\zeta)\vartheta_{m,i}^{e+1}$, for $i = 1,2$.\\
			The QKD controllers borrow the trained Q-network from QKD managers to facilitate the update of its local policy, i.e. $\vartheta_{i}^{e, fed}\leftarrow \vartheta_{m,i}^e$, for $i=1,2$.\\
			The QKD controllers update the local policy networks according to Eq. (\ref{eq:cp}).\\
			The local policy networks of QKD controllers are aggregated in Eq. (\ref{eq:agg}) and then sent to QKD managers.
			}
			}
		}\label{alg}
	\end{algorithm}
	
	Assumption \ref{as1} is used to provide the feasibility of solution space and guarantee the update rule of federated reinforcement learning-based scheme. Then, we can have the lemma as follows:
	
	\begin{lemma}
		$C(\varphi) = \sum_{n=1}^{N}B_n C_n(\varphi_n)/B$ is $\epsilon$-strongly convex and $\varepsilon$-smooth.
	\end{lemma}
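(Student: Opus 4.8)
The plan is to derive the claimed properties of the aggregate function $C(\varphi)=\sum_{n=1}^{N}B_n C_n(\varphi_n)/B$ directly from Assumption~\ref{as1}, exploiting the fact that a convex combination of functions inherits smoothness and (strong) convexity with the same constants when the weights are nonnegative and sum to one. First I would note that $B=\sum_{n=1}^{N}B_n$, so the coefficients $B_n/B$ form a convex combination, i.e.\ $B_n/B\geq 0$ and $\sum_{n=1}^{N}B_n/B=1$. This is the structural fact that makes everything go through.

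Next, for $\varepsilon$-smoothness, I would take the smoothness inequality for each $C_n$ from Assumption~\ref{as1}, namely $C_n(\varphi_n')\leq C_n(\varphi_n)+\nabla C_n(\varphi_n)\cdot(\varphi_n'-\varphi_n)+\frac{\varepsilon}{2}\|\varphi_n-\varphi_n'\|^2$, multiply it by $B_n/B$, and sum over $n$. Since $\nabla C(\varphi)$ decomposes blockwise as the stacked vector of $(B_n/B)\nabla C_n(\varphi_n)$ and $\sum_n (B_n/B)\|\varphi_n-\varphi_n'\|^2=\|\varphi-\varphi'\|^2$ in the appropriate block norm, the weighted sum telescopes into exactly the $\varepsilon$-smoothness inequality for $C$. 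For convexity, a nonnegative weighted sum of convex functions is convex, so $C$ is convex; this is immediate. The only genuinely new part of the statement is the $\epsilon$-\emph{strong} convexity, and here I would point out that the lemma's phrasing presumes some lower curvature constant $\epsilon$ on the $C_n$ (implicitly, each $C_n$ is $\epsilon$-strongly convex, consistent with how Assumption~\ref{as1} is used downstream); granting that, one multiplies the strong-convexity inequality $C_n(\varphi_n')\geq C_n(\varphi_n)+\nabla C_n(\varphi_n)\cdot(\varphi_n'-\varphi_n)+\frac{\epsilon}{2}\|\varphi_n-\varphi_n'\|^2$ by $B_n/B$, sums, and obtains the same inequality for $C$ by the identical block-norm bookkeeping.

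I expect the main obstacle to be not any analytic difficulty but a bookkeeping/notational one: making precise the product space on which $\varphi=(\varphi_1,\dots,\varphi_N)$ lives, the block-diagonal structure of $\nabla C$, and which norm is meant in the smoothness estimate, so that $\sum_n (B_n/B)\|\varphi_n-\varphi_n'\|^2$ really equals $\|\varphi-\varphi'\|^2$ rather than merely being comparable to it. If instead a single common Euclidean norm on the full vector is intended, I would use the elementary bound $\sum_n (B_n/B)\|\varphi_n-\varphi_n'\|^2 \le \|\varphi-\varphi'\|^2$ (since $B_n/B\le 1$), which still yields $\varepsilon$-smoothness, and a matching lower bound for the strong-convexity direction. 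I would then close by remarking that the constants are preserved exactly because the weights are a probability vector, which is also why the subsequent convergence analysis of the FedAvg-style aggregation in Eq.~\eqref{eq:agg} can treat $C$ with the same $(\epsilon,\varepsilon)$ as each local $C_n$.
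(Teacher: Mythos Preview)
Your proposal is correct and follows essentially the same approach as the paper's own proof, which simply observes that $C(\varphi)$ is a weighted finite sum of the local $C_n(\varphi_n)$ and then appeals to convexity and the triangle inequality to conclude. Your version is in fact more careful than the paper's: you make explicit the convex-combination structure of the weights $B_n/B$, you flag the norm/block-structure bookkeeping, and you correctly note that the $\epsilon$-strong convexity of $C$ is not literally derivable from Assumption~\ref{as1} as stated (which only asserts convexity and $\varepsilon$-smoothness of each $C_n$) and must be taken as an implicit additional hypothesis.
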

	
	\begin{proof}
	According to Assumption \ref{as1}, the global cost function $C(\varphi)$ the a weighted finite-sum of local cost function $C_n(\varphi_n)$. Furthermore, by applying the triangle inequality and the definition of convex, $C(\varphi)$ is $\epsilon$-strongly convex and $\varepsilon$-smooth.
	\end{proof}
	
	\begin{theorem}
		Considering that $C(\varphi)$ is $\varepsilon$-smooth and $\epsilon$-strongly, let $\nu = 1/C$ and $\varphi^*=\arg\min_\varphi C(\varphi)$, we have
		\begin{equation}
		\left \| \varphi^e - \varphi^* \right \| \leq (1-\frac{\epsilon}{\varepsilon})^e\left \| \varphi^1 - \varphi^*\right \|,
		\end{equation}
		so the gradient dispersion can be derived as $O(\bar{\varphi}) = \frac{\varepsilon}{\epsilon}\log(\left\|\varphi^1 - \varphi^*\right\|/\bar{\varphi})$, which is used to illustrate how the parameters $\varphi^t$ are distributed in each worker.
	\end{theorem}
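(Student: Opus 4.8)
The plan is to read the FedAvg aggregation step~\eqref{eq:agg}, composed with one local update of the form~\eqref{eq:cp}, as a single full-gradient step on the aggregated objective $C(\varphi)=\sum_{n=1}^{N}B_nC_n(\varphi_n)/B$. Concretely, if each controller performs one local (ascent) step with step size $\nu$, then $\varphi_n^{e+1}=\varphi^e-\nu\nabla C_n(\varphi^e)$, and averaging with weights $B_n/B$ gives $\varphi^{e+1}=\varphi^e-\nu\nabla C(\varphi^e)=:T(\varphi^e)$. By Lemma~1, $C$ is $\epsilon$-strongly convex and $\varepsilon$-smooth, so the whole statement reduces to bounding the contraction factor of the gradient map $T$ for the choice $\nu=1/\varepsilon$ (the paper's $\nu=1/C$, reading the symbol $C$ there as the smoothness constant).

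First I would record the two first-order facts that follow from Assumption~\ref{as1} and Lemma~1: (i) the co-coercivity inequality for an $\epsilon$-strongly convex, $\varepsilon$-smooth function, $\langle\nabla C(\varphi)-\nabla C(\psi),\varphi-\psi\rangle\ge\frac{\epsilon\varepsilon}{\epsilon+\varepsilon}\|\varphi-\psi\|^2+\frac{1}{\epsilon+\varepsilon}\|\nabla C(\varphi)-\nabla C(\psi)\|^2$, and (ii) first-order optimality $\nabla C(\varphi^*)=0$, hence $T(\varphi^*)=\varphi^*$. Expanding $\|T(\varphi)-T(\varphi^*)\|^2=\|\varphi-\varphi^*\|^2-2\nu\langle\nabla C(\varphi),\varphi-\varphi^*\rangle+\nu^2\|\nabla C(\varphi)\|^2$, substituting (i) with $\psi=\varphi^*$, and then plugging in $\nu=1/\varepsilon$ and simplifying yields $\|T(\varphi)-\varphi^*\|\le(1-\epsilon/\varepsilon)\|\varphi-\varphi^*\|$. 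If one is willing to assume $C\in\mathcal{C}^2$, the same contraction factor follows more transparently from the fact that $I-\nu\nabla^2 C(\varphi)$ has all eigenvalues in $[1-\nu\varepsilon,\,1-\nu\epsilon]$, so $\|\,I-\nu\nabla^2C(\varphi)\,\|\le\max(|1-\nu\epsilon|,|1-\nu\varepsilon|)=1-\epsilon/\varepsilon$ for $\nu=1/\varepsilon$, combined with the mean-value form $T(\varphi)-T(\varphi^*)=\big(\int_0^1(I-\nu\nabla^2C(\varphi^*+s(\varphi-\varphi^*)))\,ds\big)(\varphi-\varphi^*)$.

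The iterate bound then follows by induction on the episode index: $\|\varphi^e-\varphi^*\|=\|T(\varphi^{e-1})-T(\varphi^*)\|\le(1-\epsilon/\varepsilon)\|\varphi^{e-1}-\varphi^*\|\le\cdots\le(1-\epsilon/\varepsilon)^{e}\|\varphi^1-\varphi^*\|$, up to the convention fixing the first episode. For the gradient-dispersion claim I would impose $\|\varphi^e-\varphi^*\|\le\bar\varphi$, i.e.\ $(1-\epsilon/\varepsilon)^e\|\varphi^1-\varphi^*\|\le\bar\varphi$; taking logarithms and using $-\log(1-x)\ge x$ with $x=\epsilon/\varepsilon\in(0,1)$ (valid since $\epsilon\le\varepsilon$) shows that $e=\frac{\varepsilon}{\epsilon}\log\!\big(\|\varphi^1-\varphi^*\|/\bar\varphi\big)$ episodes suffice, which is exactly the stated $O(\bar\varphi)$.

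I expect the main obstacle to be two-fold. First, justifying that "one local gradient step followed by FedAvg averaging" is \emph{exactly} one gradient step on $C$: this identification is clean only when each communication round uses a single inner iteration, since multiple local steps would inject a client-drift term that destroys the exact contraction, so the reduction must be stated with that caveat. Second, obtaining precisely the constant $1-\epsilon/\varepsilon$ rather than a looser $\sqrt{1-\epsilon/\varepsilon}$ or $\frac{\varepsilon-\epsilon}{\varepsilon+\epsilon}$ forces the specific step size $\nu=1/\varepsilon$ together with the sharp co-coercivity bound (or the $\mathcal{C}^2$ Hessian-eigenvalue argument), and requires care with the overloaded notation in which $C$ denotes both the aggregated cost function and its smoothness constant.
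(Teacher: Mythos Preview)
Your proposal is correct and reaches the stated contraction and episode-count bound, but it does not follow the paper's route. Both you and the paper start from the same expansion
\[
\|\varphi^{e+1}-\varphi^*\|^2=\|\varphi^e-\varphi^*\|^2-2\nu\langle\nabla C(\varphi^e),\varphi^e-\varphi^*\rangle+\nu^2\|\nabla C(\varphi^e)\|^2,
\]
but the paper then bounds the two remaining terms \emph{separately}: it applies the strong-convexity inequality $\langle\nabla C(\varphi),\varphi-\varphi^*\rangle\ge C(\varphi)-C(\varphi^*)+\tfrac{\epsilon}{2}\|\varphi-\varphi^*\|^2$ to the cross term, and it uses the smoothness-based descent lemma $C(\varphi^*)\le C(\varphi)-\tfrac{1}{2\varepsilon}\|\nabla C(\varphi)\|^2$ to control $\|\nabla C(\varphi)\|^2$ via $C(\varphi)-C(\varphi^*)$. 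With $\nu=1/\varepsilon$ the two $C(\varphi)-C(\varphi^*)$ contributions cancel, leaving the $(1-\epsilon/\varepsilon)$ factor. You instead invoke the combined co-coercivity inequality (or, alternatively, the Hessian-eigenvalue bound $\|I-\nu\nabla^2C\|\le 1-\epsilon/\varepsilon$), which handles cross term and gradient term in one stroke. Your Hessian argument is the cleanest way to obtain exactly $(1-\epsilon/\varepsilon)$ on the \emph{non-squared} norm as the theorem states; the paper's computation in fact yields that factor on the squared norm and then passes to the non-squared norm somewhat informally. Your explicit derivation of the $\tfrac{\varepsilon}{\epsilon}\log(\|\varphi^1-\varphi^*\|/\bar\varphi)$ episode bound via $-\log(1-x)\ge x$, and your caveat that the ``FedAvg $=$ one gradient step on $C$'' identification only holds with a single local update per round, are both additions not spelled out in the paper's proof.
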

	
	\begin{proof}
		According to the $\epsilon$-strongly convexity of $C(\varphi)$, we have
		\begin{equation}
		\nabla C(\varphi)(\varphi-\varphi^*) \geq C(\varphi)-C(\varphi^*)+\frac{\epsilon}{2}\left\|\varphi-\varphi^*\right\|^2.
		\end{equation}
		Thus, we can obtain the following:
		\begin{equation}
		\begin{aligned}
		&\left\|\varphi^{e+1}-\varphi^*\right\|^2 = \left\|\varphi^e -\nu\nabla C(\varphi^e)-\varphi^* \right\|^2\\
		& = \left\| \varphi^e - \varphi^*\right\|-2\nu \nabla C(\varphi^e)(\varphi^e-\varphi^*)+\nu^2\left\| \nabla C(\varphi^e)\right\|^2\\
		& \leq \left\| \varphi^e - \varphi^* \right\| - 2\nu(C(\varphi)-C(\varphi^*)\\&+\frac{\epsilon}{2}\left\|\varphi^e-\varphi^*\right\|^2)+\nu^2\left\| \nabla C(\varphi^e)\right\|.
		\end{aligned}\label{eq:45}
		\end{equation}
		By smoothing $C(\varphi)$, the gradient bound can be obtained as
		\begin{equation}
		\begin{aligned}
		C(\varphi^*) & \leq C(\varphi - \frac{1}{\varepsilon}\nabla C(\varphi))\\
		& \leq C(\varphi) - \left\| \nabla C(\varphi) \right\|^2 + \frac{1}{2\varepsilon}\left\| \nabla C(\varphi) \right\|^2\\
		& \leq C(\varphi) - \frac{1}{2\varepsilon}.
		\end{aligned}\label{eq:46}
		\end{equation}
		Combining~\eqref{eq:45}, \eqref{eq:46} can be reformulated as
		\begin{equation}
		\begin{aligned}
		&\left\| \varphi^{e+1} - \varphi^* \right\| = \left\| \varphi^{e} - \nu \nabla C(\varphi^e) - \varphi^* \right\|^2\\
		&\leq \left\| \varphi^{e} - \varphi^* \right\| - \nu\epsilon\left\| \varphi^{e+1} - \varphi^* \right\|+2\nu(\nu\varepsilon-1)(C(\varphi-C(\varphi^*)))\\
		& \leq (1-\frac{\epsilon}{\varepsilon})\left\| \varphi^{e} - \varphi^* \right\| \leq (1-\frac{\epsilon}{\varepsilon})\left\| \Delta^*(\varphi) \right\|,
		\end{aligned}
		\end{equation}
		where $\nu$ is set as the last iteration.
	\end{proof}
	
	The expected convergence bound can be estimated as
	\begin{equation}
	[C(\varphi^e)-C(\varphi^*)] \leq \bar{\varphi}^e[\Delta^e(C(\varphi^*))],
	\end{equation}
	where $C(\varphi)$ is proven to be bounded as $\Delta^e(C(\varphi^*)) = C(\varphi^1)-C(\varphi^*)$. Our proposed learning-based scheme can find an optimal QKD resource allocation strategy without sharing information and experience. As for the complexity of the proposed learning-based QKD allocation algorithm, each QKD manager and controller maintains its local policy and make independent decisions independently during the QKD resource allocation phase. Moreover, the input and output dimensions are constant and determined by the dimensions of the observation and action spaces. Therefore, the computation complexity is $O((N+M)UL)$ in each decision slot, where $L$ is the number of hidden layers and $U$ is the number of hidden neurons in each hidden layer.

\section{Experimental Evaluation}\label{sec:experiment}
This section evaluates the effectiveness of the proposed SIP scheme and the learning-based QKD resource allocation scheme in different system settings and topologies. First, we describe the setting of the experiments. Then, we analyze the cost structure of the proposed SIP scheme. Moreover, we compare the proposed scheme with other baseline schemes to demonstrate its effectiveness in resolving uncertainty in QKD resource allocation. Finally, a convergence analysis is performed to illustrate the performance of the proposed learning-based scheme.

\subsection{Experiment Settings}
Similar to~\cite{cao2021hybrid}, experiments are performed on two well-known topologies (i.e., the 14-node NSFNET topology and the 24-node USNET topology). The distance $L$ between two QT is set to be $160$ km. The required secret-key  rate is considered to be the same for all secure model transmission requests. To streamline the evaluation, the uncertainty $\mathcal{K}$ is reformulated by $\acute{\mathcal{K}}$ as
\begin{equation}
\begin{aligned}
    \acute{\mathcal{K}} = \{(k_1,k_2,\ldots,k_{|\mathcal{R}|}) | &0\leq k_1,k_2,\ldots,k_R<|\Omega|,\\& k_1=k_2=\cdots=k_{|\mathcal{R}|}\},
\end{aligned}\label{eq:34}
\end{equation}
where $|\Omega|$ denotes the total number of scenarios. For each scenario, the probability distribution of secret-key rate requirements obeys a Poisson process, in which the average secret-key rate requirements are set to $\left \lfloor |\Omega|/3 \right \rfloor$. The default number of scenarios is set to $|\Omega|=10$. The default numbers of quantum-secured FEL model transmission requests $|\mathcal{R}|$ are set to 100 and 200 for NSFNET and USNET, respectively. Finally, for the performance evaluation and analysis, the cost values are given in Table~\ref{tb:I}, with a unit representing a normalized monetary unit. As for the proposed learning-based approach, the learning rate is set to $0.003$ for Q-networks and $0.0001$ for policy networks. The temperature coefficient $\tau$ is set at $0.01$. The reward discounting factor is set to $0.95$, and the update coefficient of target Q-networks is set to $0.01$. The number of training epochs is set to $100$. Finally, the size of the replay buffer is set to $20000$ for each learning agent. The SIP model is formulated and solved by Gurobipy 9.5.2 and the DRL algorithms are implemented via PyTorch 1.9.0. 
\begin{table}[h]
\caption{Reserved and on-demand cost values in experimental evaluation (Units)~\cite{cao2021hybrid}}\label{tb:I}
\centering \begin{tabular}{|c|c|c|c|c|c|c|}
\hline
Cost      & $\beta_{Tx}^r$ & $\beta_{Rx}^r$ & $\beta_{KM}^r$ & $\beta_{SI}^r$ & $\beta_{MD}^r$ & $\beta_{Ch}^r$ \\ \hline
Reserved  &           1500     &          2250      & 1200  & 150 & 300 & 1  \\ \hline
On-demand &          6000      &          9000      & 3000  & 500 & 900 & 4  \\ \hline
\end{tabular}
\end{table}




\subsection{Global Model Performance under Different Secret-key Rates}
\begin{figure}[!]
    \centering
    \includegraphics[width=0.8\linewidth]{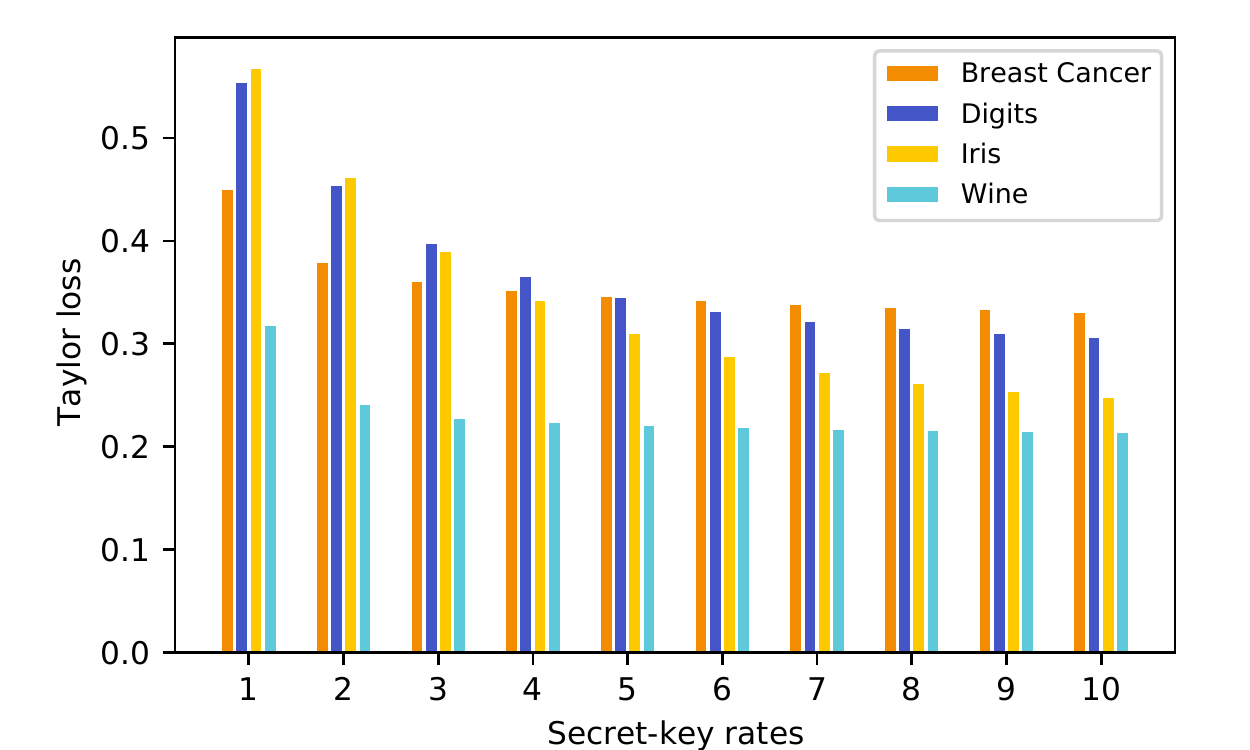}
    \caption{Performance of the global model in quantum-secured FEL systems with various secret-key rates.}
    \label{fig:FEL}
\end{figure}
We conduct our experiments on four datasets from scikit-learn\footnote{\url{https://scikit-learn.org/stable/datasets/toy_dataset.html}}, including breast cancer, digits, iris, and wine. As shown in Fig. \ref{fig:FEL}, the Taylor loss of the global model decreases as the required secret-key rates increase. The model owner can hire more FEL workers if the cluster head has higher secret-key rates to maintain the quantum-secured channels with the FEL workers. Thus, the performance of the global model can be improved by including more knowledge from the datasets of FEL workers. For the complex Digits dataset, the Taylor loss~\cite{hardy2017private} of the classification model decreases from about 0.55 to about 0.3 as the secret-key rates increase. However, the simple wine dataset only requires lower secret-key rates to converge. The reason is that the secret-key rates affect the number of concurrent quantum-secured channels that the cluster heads can maintain, which affects the number of FEL workers in global iterations.

\begin{figure}[t]
\centering
		\subfigure[NSFNET.]{
			\begin{minipage}[t]{0.5\linewidth}
				\includegraphics[width=1\linewidth]{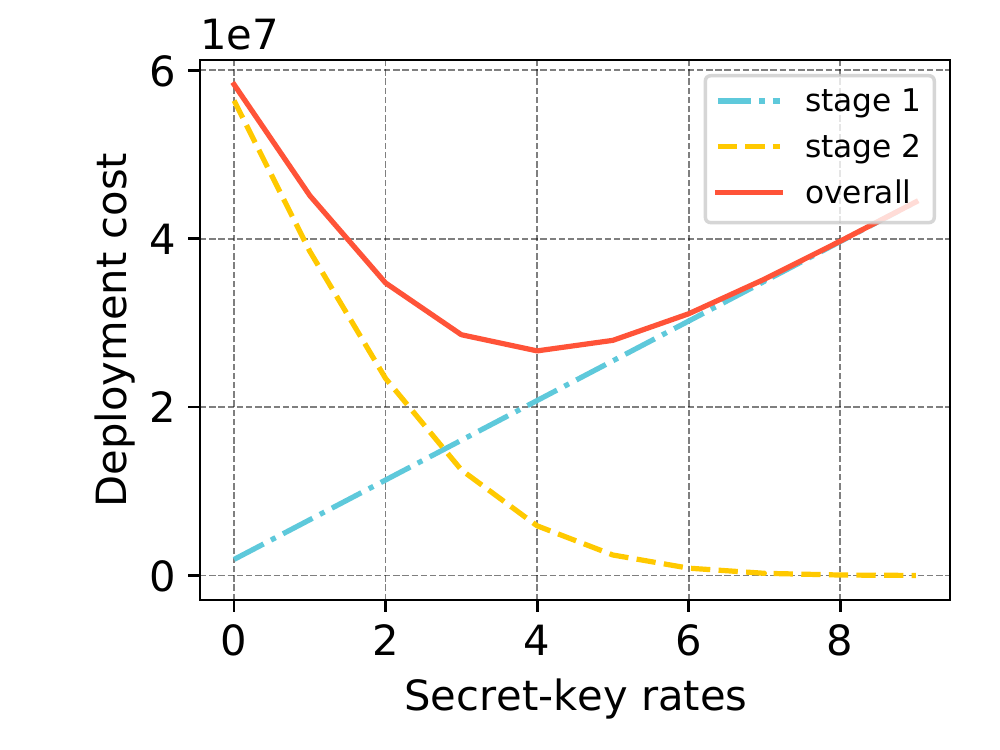}
			\end{minipage}%
		}%
		\subfigure[USNET.]{
			\begin{minipage}[t]{0.5\linewidth}
				
				\includegraphics[width=1\linewidth]{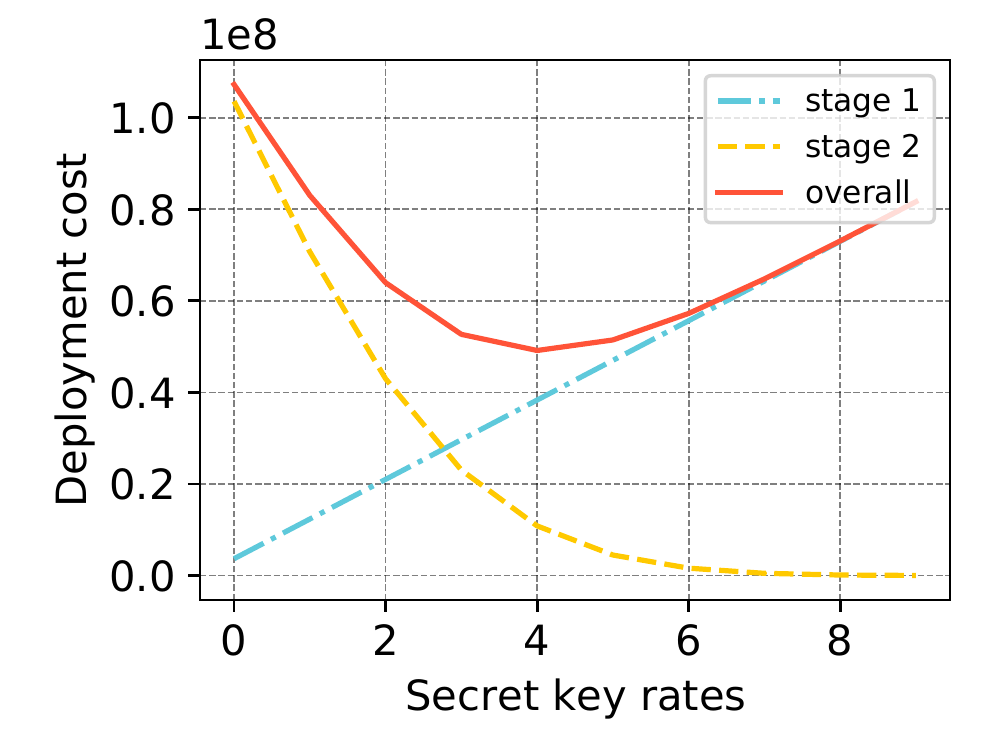}
			\end{minipage}%
		}%
		\caption{The cost structure of SIP for quantum-secured FEL over the NSFNET and the USNET topologies.}
		\label{coststructure}
\end{figure}

\begin{figure}[t]
\centering
		\subfigure[NSFNET.]{
			\begin{minipage}[t]{0.5\linewidth}
				\includegraphics[width=1\linewidth]{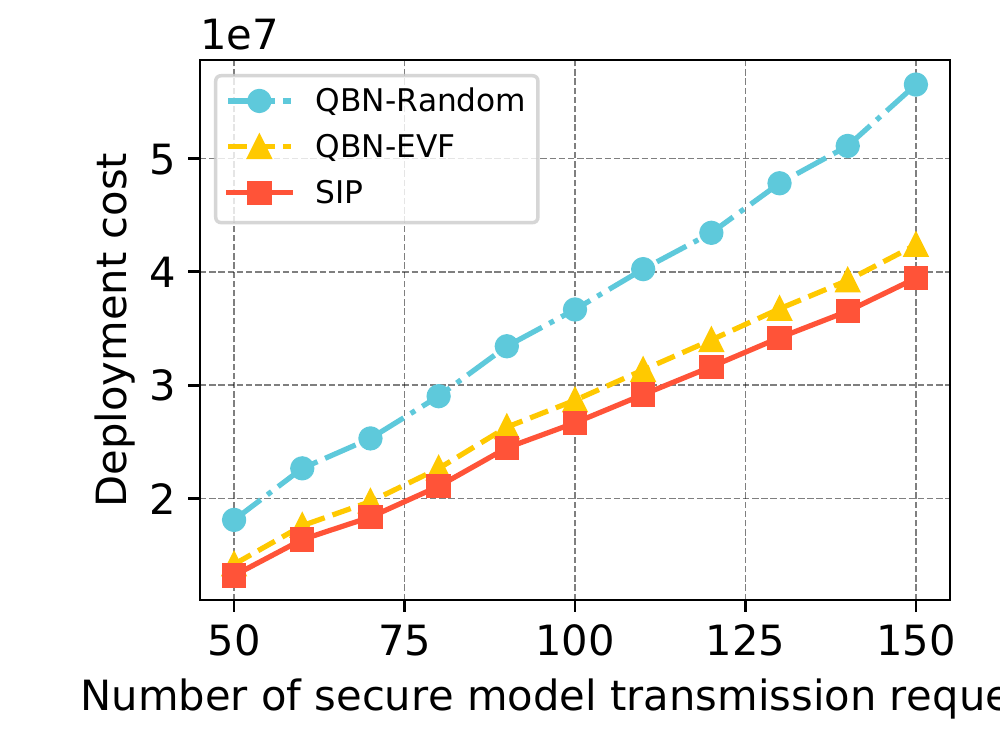}
			\end{minipage}%
		}%
		\subfigure[USNET.]{
			\begin{minipage}[t]{0.5\linewidth}
				
				\includegraphics[width=1\linewidth]{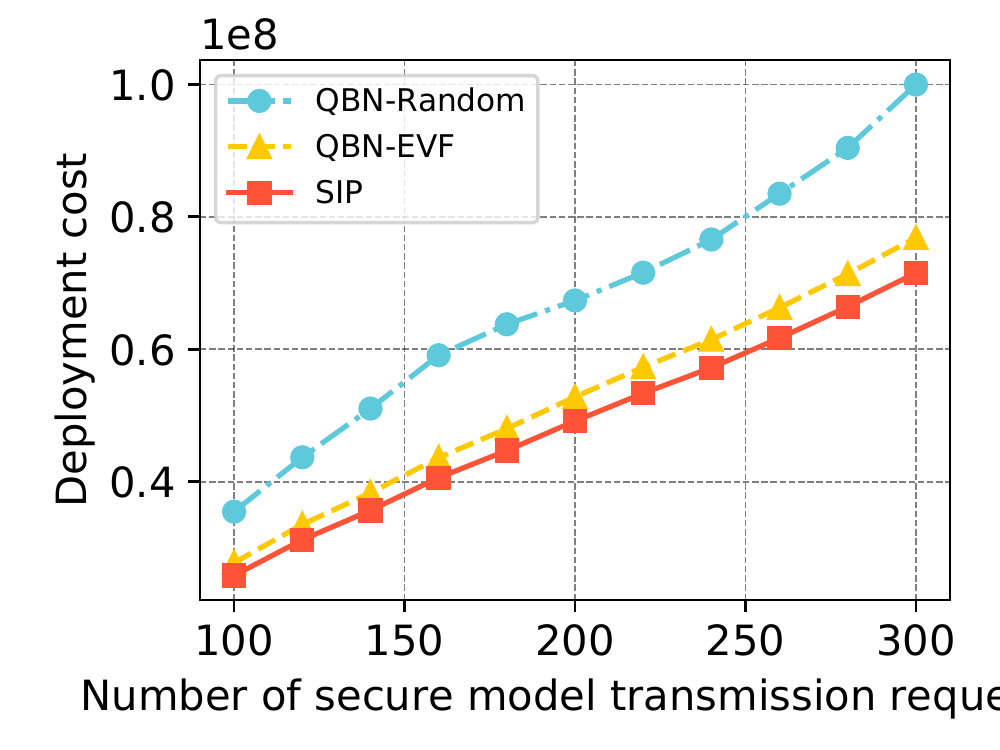}
			\end{minipage}%
		}%
		\caption{Deployment costs of QKD resources versus numbers of quantum-secured FEL model transmission requests.}
		\label{chain}
\end{figure}

\begin{figure}[t]
\centering
		\subfigure[NSFNET.]{
			\begin{minipage}[t]{0.5\linewidth}
				\includegraphics[width=1\linewidth]{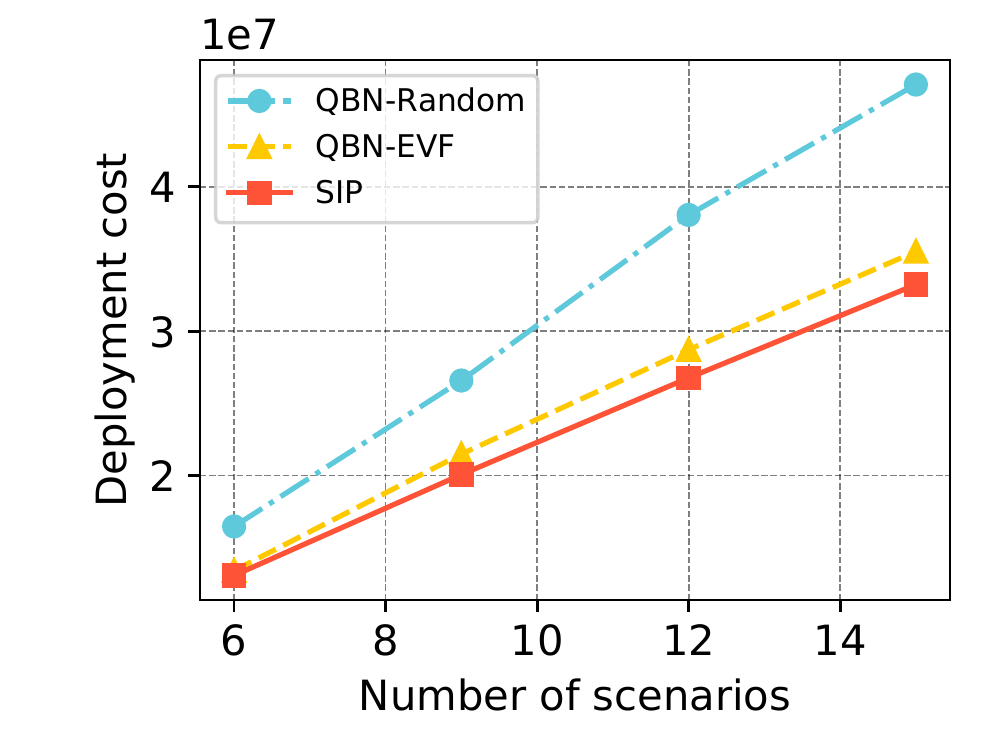}
			\end{minipage}%
		}%
		\subfigure[USNET.]{
			\begin{minipage}[t]{0.5\linewidth}
				
				\includegraphics[width=1\linewidth]{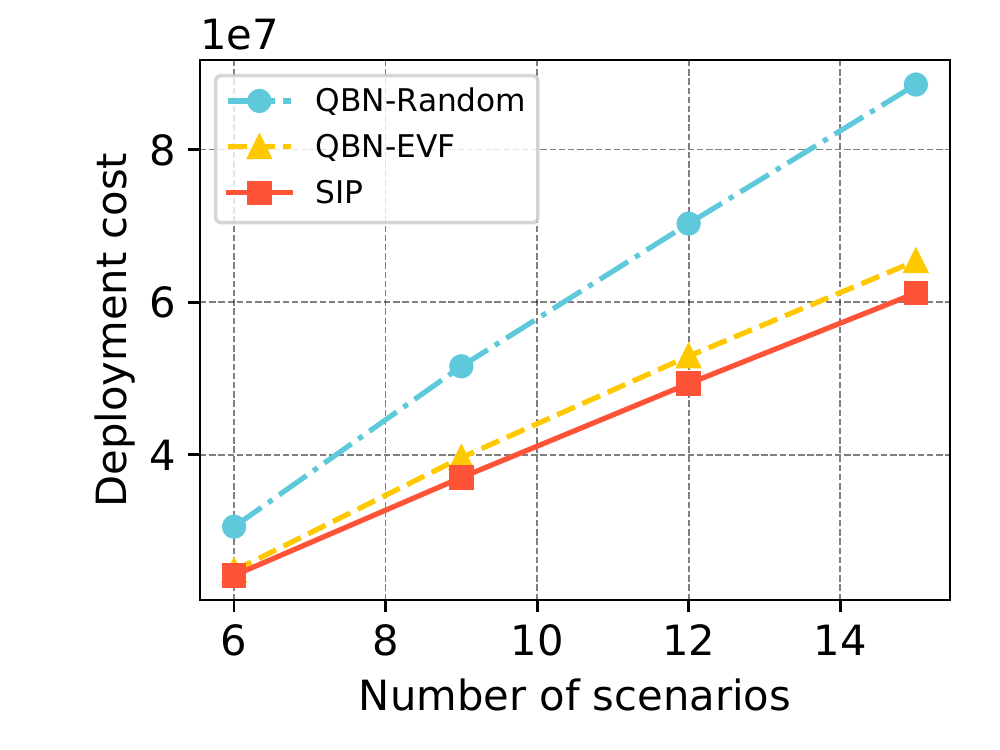}
			\end{minipage}%
		}%
		\caption{Deployment costs of QKD resources versus numbers of scenarios in quantum-secured FEL.}
		\label{scenarios}
\end{figure}

\begin{figure}[t]
\centering
		\subfigure[NSFNET.]{
			\begin{minipage}[t]{0.5\linewidth}
				\includegraphics[width=1\linewidth]{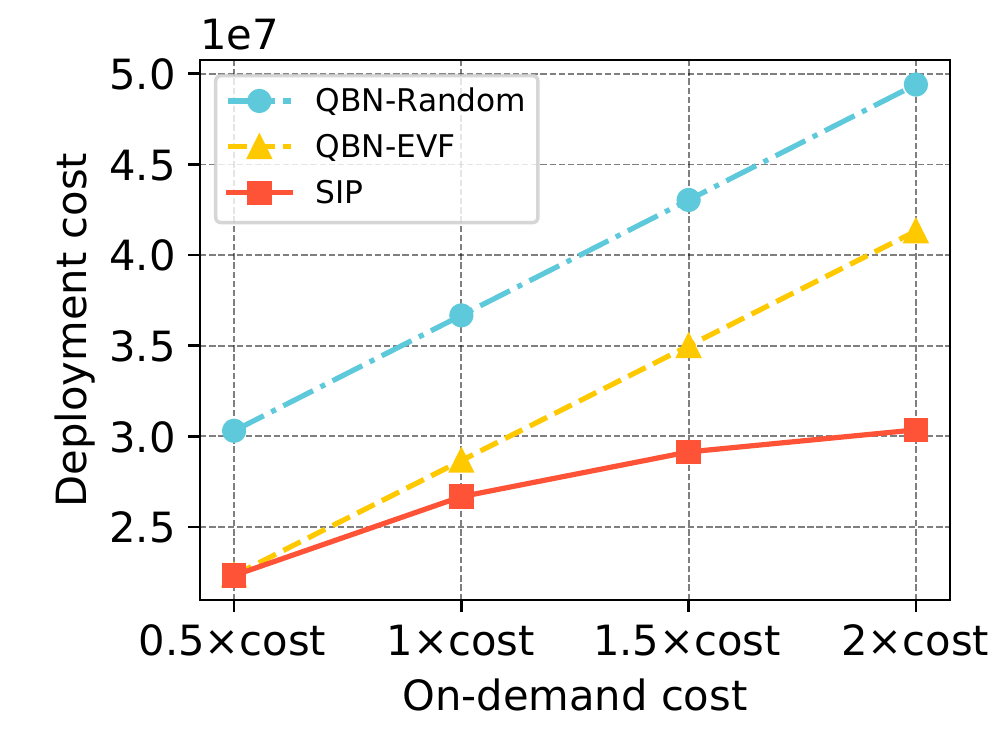}
			\end{minipage}%
		}%
		\subfigure[USNET.]{
			\begin{minipage}[t]{0.5\linewidth}
				
				\includegraphics[width=1\linewidth]{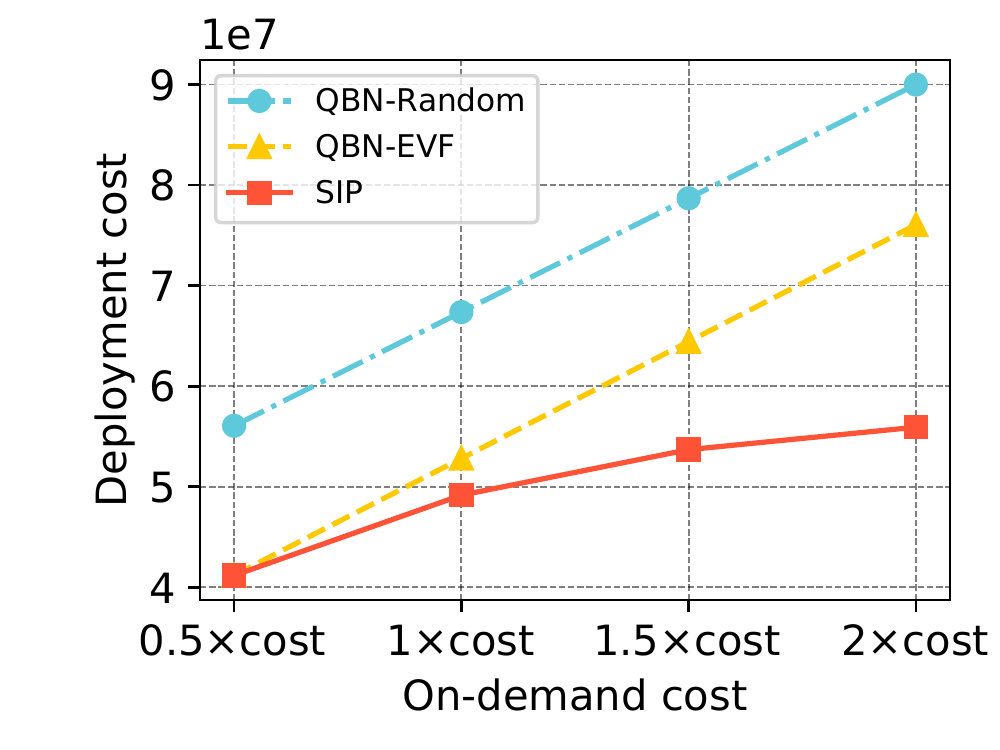}
			\end{minipage}%
		}%
		\caption{Deployment costs of QKD resources versus different on-demand cost schemes}
		\label{ondemand}
\end{figure}

\subsection{Cost Structure Analysis} Initially, the cost structure of the SIP for quantum-secured FEL applications is studied. To simplify the presentation of the cost structure, equal resources are reserved for each QKD request on each link. In Fig. \ref{coststructure}, the costs in the first stage, the second state, and overall are illustrated by varying the number of reserved requests for each link. As expected, the cost of the first stage resource reserve increases with the increase in secret-key rates. However, after the realized demand is known, the costs of the second stage decrease because the reserved QKD wavelengths increase, as the requests require more minor on-demand compensation. Here, the optimized plan can be determined (e.g., when the reserved number of wavelengths equals three, as shown in Fig. \ref{coststructure}). The analysis of the cost structure shows that the optimal solution is not easy to obtain due to the uncertainty of the system in terms of security. For instance, the optimal QKD resource allocation plan is not at the point where the cost of the second stage exceeds the cost of the first stage. Thus, it is necessary to formulate the SIP of the QKD layer in such a way that the deployment cost is optimized.

\subsection{Performance Evaluation under Various Parameters}

In the performance evaluation, the proposed SIP scheme is compared with two baseline schemes, i.e., the QKD backbone networking (QBN) scheme with expected value formulation (QBN-EVF) allocation and the QBN with random (QBN-Random) allocation, presented in~\cite{cao2021hybrid}. In the case of QBN-EVF, the secret-key rates in the first stage are determined by the demanded average, which represents an approximate solution. On the other hand, in the QBN-Random scheme, the values of the decision variables are generated uniformly from zero to $|\Omega|$, indicating a random scheme.

The deployment costs of QBN-Random, QBN-EVF, and SIP for different numbers of quantum-secured FEL model transmission requests are shown first. In Fig. \ref{chain}, we can observe that as the number of secure model transmission requests increases, the deployment cost of all three solutions increases accordingly. Moreover, the difference in deployment cost between the three schemes also increases with the increase in the number of quantum-secured FEL model transmission requests. Specifically, the deployment cost of QBN-EVF is slightly higher than that of SIP, while the deployment cost of QBN-Random is $50\%$ higher than that of SIP. A similar situation arises in the comparison of different numbers of scenarios, as shown in Fig. \ref{scenarios}, namely, the deployment cost of QBN-EVF is slightly higher than that of SIP, while the deployment cost of QBN-Random is twice as expensive as that of SIP. Fig. \ref{ondemand} shows the impact of On-demand costs on the deployment costs of each scenario. For the QBN-Random and QBN-EVF scenarios, the deployment cost increases exponentially with the On-demand cost. In detail, at \textit{0.5$\times$cost}, QBN-Random's deployment cost is 50\% more than SIP's, while QBN-EVF's deployment cost is very close to SIP's. However, when the On-demand cost rises to \textit{2$\times$cost}, the advantage of QBN-EVF decreases, and the deployment cost rises to about 1.5 times that of SIP, while the deployment cost of QBN-Random rises to about twice to SIP's.

\begin{figure}[t]
\centering
		\subfigure[NSFNET.]{
			\centering\begin{minipage}[t]{0.8\linewidth}
				\includegraphics[width=1\linewidth]{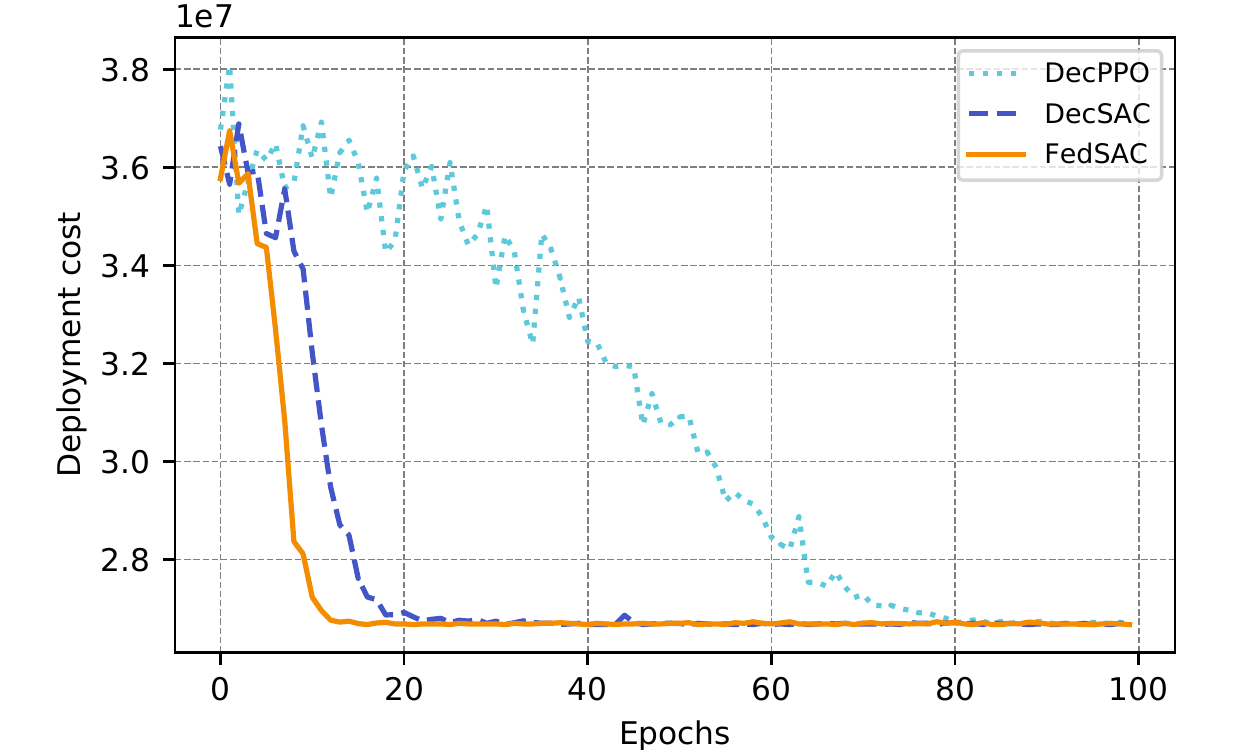}
			\end{minipage}%
		}%

		\subfigure[USNET.]{
			\centering\begin{minipage}[t]{0.8\linewidth}
				\includegraphics[width=1\linewidth]{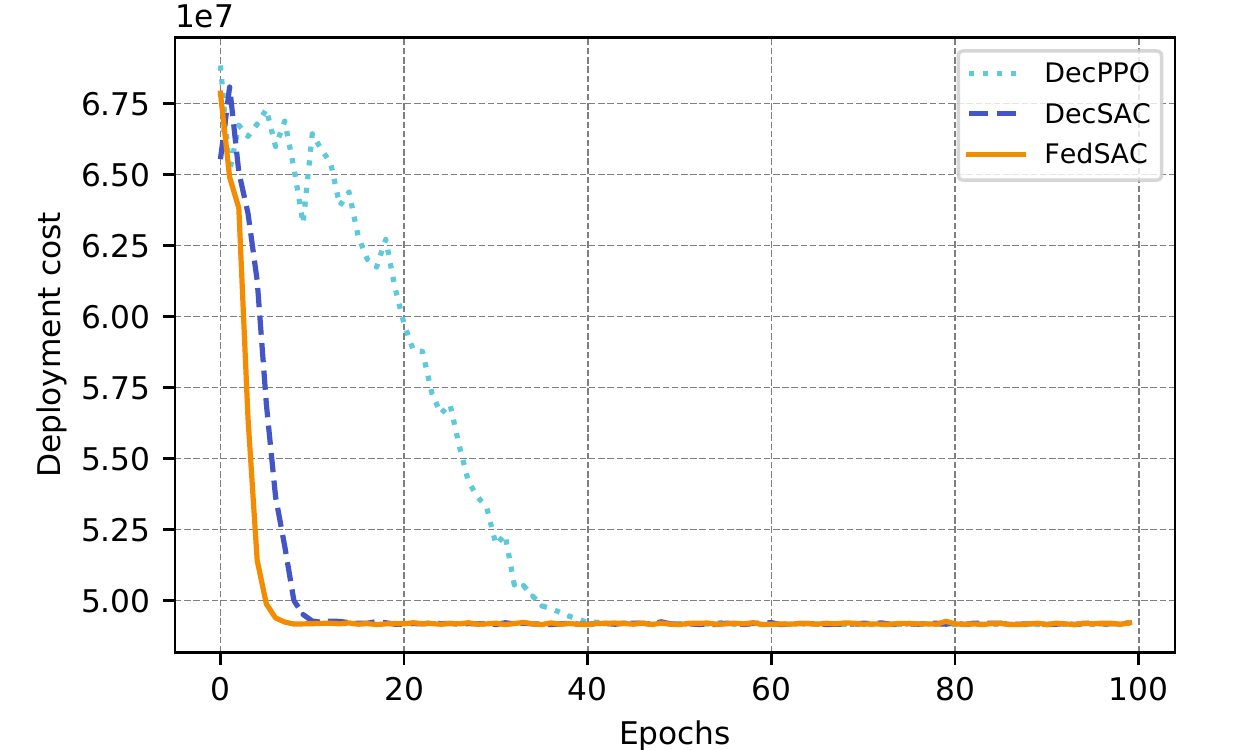}
			\end{minipage}%
		}%
		\caption{Convergence analysis of the proposed scheme.}
		\label{fig:convergence}
\end{figure}
\subsection{Convergence Efficiency of the Proposed Schemes}
In the convergence efficiency analysis of the proposed federated reinforcement learning (FedSAC), the decentralized proximal policy optimization (DecPPO) ~\cite{schulman2017proximal} and the decentralized SAC (DecSAC) schemes are used as the baselines. The DecPPO scheme leverages policy gradient to train the local policy of QKD managers to maximize long-term advantages. The DecSAC scheme adopts soft-Q-learning to optimize the performance of QKD managers while maximizing the entropy of action probability. Both the DecPPO and DecSAC schemes train the QKD managers in a decentralized manner while cannot utilize the incomplete experiences in QKD controllers' replay buffers. As shown in Fig. \ref{fig:convergence}, the proposed scheme and the baseline scheme converge to the optimal QKD resource allocation policy in both network topologies. However, there is a difference in the efficiency of their convergence. For example, the PPO scheme needs about 80 epochs to converge to the optimal QKD resource allocation in the NSFNET topology, while SAC needs about 20 epochs. With the incomplete experience of QKD controllers, FedSAC has a significant improvement in convergence efficiency, requiring only about ten epochs to converge. In the USNET topology, each scheme takes less time to converge, e.g., 40 epochs for PPO, ten epochs for SAC, and five epochs for FedSAC. The reason for this could be that the deployment cost in the USNET topology is slightly higher than in the NSFNET topology, which is a clearer signal for the DRL agents to learn.

\section{Conclusions}\label{sec:conclusions}
 In this paper, we have studied the QKD resource allocation problem for secure federated edge learning. To protect the transmission of FEL models from eavesdropping attacks, we have proposed a hierarchical architecture to facilitate quantum-secured FEL systems. Based on this, we have formulated the optimization of the QKD resource allocation scheme as a SIP model and obtained a cost-effective scheme under uncertainty. To allocate resources in a decentralized and privacy-preserving manner, we proposed a learning-based QKD allocation scheme empowered by federated deep reinforcement learning. The proposed model and scheme have achieved a lower deployment cost of QKD resources compared to the baseline schemes as they can accommodate probabilistic variations in FEL security demand adequately. In the future, we plan to study the resource allocation problem for quantum-secured semantic communication systems in edge networks.

\bibliographystyle{IEEEtran}
\bibliography{myreference}

\end{document}